\pgfplotsset{compat=1.14}
\newcommand{\algoIfrule}{\SetKwIF{Ifrule}{ElseIfrule}{Else}{if}{then}{else if}{else}{endif}}
\DeclareMathOperator{\RPiv}{\normalfont{\texttt{PivR}}}
\DeclareMathOperator{\IPiv}{\normalfont{\texttt{PivI}}}
\DeclareMathOperator{\ExtendRat}{\normalfont{\texttt{ExtendR}}}
\DeclareMathOperator{\ExtendInt}{\normalfont{\texttt{ExtendI}}}
\DeclareMathOperator{\ExtendMEH}{\normalfont{\texttt{ExtendMEH}}}
\DeclareMathOperator{\ReduceLeftInt}{\normalfont{\texttt{ReduceLeftI}}}
\DeclareMathOperator{\ReduceRightInt}{\normalfont{\texttt{ReduceRightI}}}
\DeclareMathOperator{\AbstractToInt}{\normalfont{\texttt{AbstractToInt}}}
\newcommand{\mop}[1]{\textrm{#1}}
\newcommand{\lcm}{\mop{lcm}}
\newcommand{\floor}{\mop{floor}}
\newcommand\piv[2]{\mop{piv}(#1,#2)}
\newcommand\Pset[2]{\mathcal{Q}(#1 x \leq #2)}
\newcommand\Rset[1]{\mathcal{Q}(#1)}
\newcommand\Mset[1]{\mathcal{M}(#1)}
\newcommand\centermath[1]{\newline \centerline{$#1$}}
\newcommand\centerequation[1]{\refstepcounter{equation}\newline \centerline{\hfill $#1$ \hfill (\theequation)}}
\theoremstyle{definition}
\title{A Reduction from\\Unbounded Linear Mixed Arithmetic Problems\\into Bounded Problems}
\author{Martin Bromberger\inst{1,2}}
\institute{Max Planck Institute for Informatics and Saarland University, Saarland Informatics Campus, Germany   \email{mbromber@mpi-inf.mpg.de} \and
           Graduate School of Computer Science, Saarland Informatics Campus, Germany}
\titlerunning{Bounding Transformations for LA}
\begin{document}
\pagestyle{plain}

\maketitle

\begin{abstract}
  We present a combination of the Mixed-Echelon-Hermite transformation
  and the Double-Bounded Reduction for systems of linear mixed arithmetic that preserve
  satisfiability and can be computed in polynomial time.
  Together, the two transformations turn any system of linear mixed constraints into a bounded system, 
  i.e., a system for which termination can be achieved easily. 
  Existing approaches for linear mixed arithmetic, 
  e.g., branch-and-bound and cuts from proofs, only explore
  a finite search space after application of our two transformations.
  Instead of generating \textit{a priori} bounds for the variables, e.g., as suggested
  by Papadimitriou, unbounded variables are eliminated through the two transformations.
  The transformations orient themselves on the structure of 
  an input system instead of computing \textit{a priori} (over-)approximations out of the available constants.
  Experiments provide further evidence to the efficiency of the transformations in practice.
  We also present a polynomial method for converting certificates of
  (un)satisfiability from the transformed to the original system.
    
\begin{keywords}
Linear Arithmetic, Integer Arithmetic, Mixed Arithmetic, SMT, Linear Transformations, Constraint Solving
\end{keywords}

\end{abstract}

\section{Introduction}
\label{SE: Introduction}
\label{SE: Introduction}


Efficient linear arithmetic decision procedures are important for various independent research lines, e.g., 
optimization, system modeling, and verification. 
We are interested in feasibility of linear arithmetic problems in the context of
the combination of theories, as they occur, e.g., in SMT solving or theorem proving. 


The SMT and theorem proving communities have presented several interesting and efficient approaches for pure linear rational arithmetic~\cite{DutertredeMoura:06} as well as linear integer arithmetic~\cite{BobotCCIMMM:12,BrombergerSturmWeidenbach:15,Dillig:09,Griggio:12}. 
SMT research also starts to extend into linear mixed arithmetic~\cite{ChristHoenicke2015,DutertredeMoura:06} 
because some applications require both rational and integer variables, 
e.g., planning/scheduling problems and verification of timed automata and hybrid systems. 

We are interest in decision procedures for mixed arithmetic because of a possible combination with superposition~\cite{AlthausKW:09,BaumgartnerWaldmann2013CADE,FietzkeWeidenbach12}. 
In the superposition context, arithmetic constraints are part of the first-order clauses. 
The problems are typically unbounded due to transformations that turn the input formula into a superposition specific input format. 
Since these problems are unbounded, the search space becomes infinite, 
which is the case where termination becomes difficult for most linear arithmetic approaches. 
Unbounded problems appear also in other areas of automated reasoning. 
Either because of bad encodings, necessary but complicating transformations, e.g., slacking (see Section~\ref{SE:Experiments}), 
or the sheer complexity of the verification goal. 
Hence, efficient techniques for handling unbounded problems are necessary for a generally reliable combined procedure.


It is theoretically very easy to achieve termination for linear integer and mixed arithmetic because 
of so called \emph{a priori bounds}. 
For example, the \textit{a priori} bounds presented by Papadimitriou~\cite{Papadimitriou:81} 
guarantee that a problem has a mixed solution if and only if the problem extended by the bounds $|x_i| \leq 2 n (m a)^{2m+1}$ for every variable $x_i$ has a mixed solution. 
In these \textit{a priori} bounds, $n$ is the number of variables, $m$ the number of inequalities, and $a$ the largest absolute value of any integer coefficient or constant in the problem. 
By extending a problem with those \textit{a priori} bounds, we reduce the search space for a branch-and-bound solver (and many other mixed arithmetic decision procedures) to a finite search space. 
So branch-and-bound is guaranteed to terminate. 

However, these bounds are so large that the resulting search space 
cannot be explored in reasonable time for many practical problems. 
One reason for the impracticability of \textit{a priori} bounds is that they only take parameter sizes into account and not actually the structure of each problem.
\textit{A priori} bounds are not integrated in any state-of-the-art SMT solvers~\cite{BarrettCDHJKRT:11,ChristHoenickeNutz2012,CimattiGriggio:13,deMouraBjorner:08,Dutertre:14} since they are no help in practice. 
As far as we know, none of the state-of-the-art SMT solvers use any method that guarantees termination for linear integer or mixed arithmetic. 

In this paper, 
we present satisfiability preserving transformations 
that reduce unbounded problems into bounded problems. 
On these bounded problems, most linear mixed decision procedures become terminating, which we show on the example of branch-and-bound. 
Our reduction works by eliminating unbounded variables. 
First, we use the Double-Bounded reduction (Section~\ref{SE:splitting}) to eliminate all unbounded inequalities from our constraint system. 
Then we use the Mixed-Echelon-Hermite transformation (Section~\ref{SE:doublebounded}) to shift the variables of our system to ones that are either bounded or do not appear in the new inequalities and are, therefore, eliminated. 
With Corollary~\ref{corollary:MCTCERT} \& Lemma~\ref{lemma:mixedsoundness} we explain how to efficiently convert certificates of (un)satisfiability between the transformed and the original system.
Our method is efficient because it is fully guided by the structure of the problem. 
This is confirmed by experiments (Section~\ref{SE:Experiments}).
We also show how to efficiently determine when a problem is unbounded (Lemma~\ref{lemma:boundsandequalities}). 
This prevents our solver from applying our transformations on bounded problems.

In Appendix~\ref{SE:incremental} of this paper, we explain how to implement the presented procedures in an incrementally efficient way. 
This is relevant for an efficient SMT implementation.

The original version of this paper has been accepted by IJCAR 2018 
and will be published by Springer as part of the Lecture Notes of Computer Science Series.

\section{Preliminaries}
\label{SE:Preliminaries}

While the difference between matrices, vectors, and their components is always clear in context, 
we generally use upper case letters for matrices (e.g., $A$), lower case letters for vectors (e.g., $x$), 
and lower case letters with an index $i$ or $j$ (e.g., $b_i$, $x_j$) as components of the associated vector at position $i$ or $j$, respectively.
The only exceptions are the row vectors $a_i^T = (a_{i1}, \ldots, a_{in})$ of a matrix $A = (a_1, \ldots, a_m)^T$, which already contain an index $i$ that indicates the row's position inside $A$.
We also abbreviate the $n$-dimensional origin $(0, \ldots, 0)^T$ as $0^n$.  
Moreover, we denote by $\piv{A}{j}$ the row index of the \emph{pivot} of a column $j$, i.e., the smallest row index $i$ with a non-zero entry $a_{ij}$ or $m+j$ if there are no non-zero entries in column $j$.

A system of constraints $A x \leq b$ is just a set of non-strict inequalities\footnote{All techniques discussed in this paper can be extended to strict inequalities with the help of $\delta$-rationals~\cite{DutertredeMoura:06}. We will omit the strict inequalities and focus only on non-strict inequalities due to lack of space.} $\{a_1^T x \leq b_1, \ldots, a_m^T x \leq b_m\}$ and the \emph{rational solutions} of this system are exactly those points $x \in \mathbb{Q}^n$ that satisfy all inequalities in this set. 
The row coefficients are given by $A = (a_1, \ldots, a_m)^T \in \mathbb{Q}^{m \times n}$, 
the variables are given by $x = (x_1, \ldots, x_n)^T$, and the inequality bounds are given by $b = (b_1, \ldots, b_m)^T \in \mathbb{Q}^{m}$. 
Moreover, we assume that any constant rows $a_i = 0^n$ were eliminated from our system during an implicit preprocessing step. 
This is a trivial task and eliminates some unnecessarily complicated corner cases. 

In this paper, we consider mixed constraint systems, i.e., variables are assigned a type: either rational or integer. 
Due to convenience, we assume that the first $n_1$ variables $(x_1, \ldots, x_{n_1})$ are rational and the remaining $n_2$ variables $(x_{n_1 + 1}, \ldots, x_n)$ are integer, where $n = n_1 + n_2$.
A \emph{mixed solution} is a point $x \in (\mathbb{Q}^{n_1} \times \mathbb{Z}^{n_2})$ that satisfy all inequalities in $A x \leq b$ and 
we denote by $\Mset{A x \leq b}= \{x \in (\mathbb{Q}^{n_1} \times \mathbb{Z}^{n_2}) : A x \leq b\}$ the \emph{set of mixed solutions} 
to the system of inequalities $A x \leq b$. 
We sometimes need to relax the variables to be completely rational. 
Therefore, we denote by $\Rset{A x \leq b}= \{x \in \mathbb{Q}^n : A x \leq b\}$ the \emph{set of rational solutions} 
to the system of inequalities $A x \leq b$.

Since $A x \leq b$ and $A' x \leq b'$ are just sets, we can write their combination as $(A x \leq b) \cup (A' x \leq b')$. 
A special system of inequalities is a system of equations $D x = c$, which is equivalent to the combined system of inequalities $(D x \leq c) \cup (-D x \leq -c)$. 
We say that a constraint system implies an inequality $h^T x \leq g$, 
where $h \in \mathbb{Q}^n$, $h \neq 0^n$, and $g \in \mathbb{Q}$, 
if $h^T x \leq g$ holds for all $x \in \Pset{A}{b}$.
In the same manner, a constraint system implies an equality $h^T x = g$, 
where $h \in \mathbb{Q}^n$, $h \neq 0^n$, and $g \in \mathbb{Q}$, 
if $h^T x = g$ holds for all $x \in \Pset{A}{b}$.
A constraint implied by $A x \leq b$ is \emph{explicit} if it does appear in $A x \leq b$. 
Otherwise, it is called \emph{implicit}.

Most deductions on linear inequalities are based on Farkas' Lemma: 

\begin{lemma}[Farkas' Lemma~\cite{Boyd:04}]
$\Rset{A x \leq b} = \emptyset$ iff there exists a $y \in \mathbb{Q}^m$ with $y \geq 0^m$ and $y^T A = 0^n$ so that $y^T b < 0$, i.e.,
there exists a non-negative linear combination of inequalities in $A x \leq b$ that results in an inequality $y^T A x \leq y^T b$ that is constant and unsatisfiable. If such a $y$ exists, then we call it a \emph{certificate of unsatisfiability}.
\label{lemma:farkasunsat}
\end{lemma}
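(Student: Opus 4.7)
The plan is to prove both directions separately. The $(\Leftarrow)$ direction is immediate: if a $y \in \mathbb{Q}^m$ with $y \geq 0^m$, $y^T A = 0^n$, and $y^T b < 0$ exists, then for any hypothetical $x \in \Rset{A x \leq b}$ I would left-multiply $A x \leq b$ by $y^T$, which preserves the inequality because $y \geq 0^m$, obtaining $0 = (y^T A) x \leq y^T b < 0$, a contradiction. Hence $\Rset{A x \leq b}$ must be empty.

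For the $(\Rightarrow)$ direction, which is the nontrivial one, I would apply a separating hyperplane argument to the rational convex cone $K = \{A v + s : v \in \mathbb{Q}^n,\ s \in \mathbb{Q}^m_{\geq 0}\} \subseteq \mathbb{Q}^m$. The key observation is that $b \in K$ exactly when $\Rset{A x \leq b} \neq \emptyset$: any solution $x$ certifies $b = A x + (b - A x)$ with nonnegative slack $b - A x \geq 0^m$, and conversely any representation $b = A v + s$ with $s \geq 0^m$ yields the solution $v$. Under the hypothesis $\Rset{A x \leq b} = \emptyset$, the point $b$ lies outside $K$, so I separate it from $K$ by a rational hyperplane, producing $y \in \mathbb{Q}^m$ with $y^T w \geq 0$ for all $w \in K$ and $y^T b < 0$. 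Specializing $w = A v$ for arbitrary $v \in \mathbb{Q}^n$ (both signs of $v$ are admissible) forces $y^T A = 0^n$, and specializing $w = e_i$ for the $i$-th standard basis vector forces $y_i \geq 0$, so $y \geq 0^m$. Combined with $y^T b < 0$, this $y$ is the desired certificate.

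The main obstacle is obtaining a \emph{rational} separating $y$ rather than a merely real one, since the textbook separating hyperplane theorem lives in $\mathbb{R}^m$. I would resolve this by noting that $K$ is a finitely generated rational polyhedral cone (generated by the columns of $A$, their negatives, and the standard unit vectors), so its polar cone is also rational polyhedral and any supporting normal at $b$ can be chosen with rational entries. A second, more constructive route that sidesteps the geometry is Fourier--Motzkin elimination: iteratively eliminating variables by nonnegative rational combinations of the current inequalities must, under infeasibility, terminate in a contradiction of the form $0 \leq c$ with $c < 0$, and tracking the accumulated combinations delivers $y \in \mathbb{Q}^m_{\geq 0}$ directly. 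The Fourier--Motzkin route aligns better with the computational spirit of the rest of the paper, while the separating hyperplane proof is cleaner and matches the cited reference.
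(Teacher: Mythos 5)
The paper offers no proof of this lemma at all: Farkas' Lemma is imported as a black-box citation to Boyd and Vandenberghe, so there is no in-paper argument to compare yours against. Your proposal is a correct, standard proof. The easy direction is right: left-multiplying $A x \leq b$ by a nonnegative $y$ with $y^T A = 0^n$ and $y^T b < 0$ yields $0 \leq y^T b < 0$. For the hard direction, your conic reformulation ($b \in K$ iff the system is feasible, where $K$ is the cone generated by the columns of $A$, their negatives, and the unit vectors) and the subsequent separation argument are sound, including the orientation of the separating functional and the specializations that force $y^T A = 0^n$ and $y \geq 0^m$. You are also right to flag rationality as the main obstacle, and your two fixes are both legitimate. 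One small point worth making explicit on the separation route: you must also justify that $b \notin K$ over $\mathbb{Q}$ implies $b \notin K$ over $\mathbb{R}$ (i.e., that the rational system $A v + s = b$, $s \geq 0^m$ has no \emph{real} solution either), since the projection/separation argument takes place in $\mathbb{R}^m$; this fact about rational polyhedra is itself usually proved by elimination, which is why your second route via Fourier--Motzkin is the cleaner one here --- it works entirely over $\mathbb{Q}$, produces the certificate $y$ constructively as a nonnegative rational combination, and needs neither closedness of finitely generated cones nor any real-to-rational transfer.
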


We also frequently use the following lemma, which is just a reformulation of Farkas' Lemma:

\begin{lemma}[Linear Implication Lemma]
Let $\Rset{A x \leq b} \neq \emptyset$, $h \in \mathbb{Q}^n \setminus\{0^n\}$, and $g \in \mathbb{Q}$. Then, $A x \leq b$ implies $h^T x \leq g$ iff there exists a $y \in \mathbb{Q}^m$ with $y \geq 0^m$ and $y^T A = h^T$ so that $y^T b \leq  g$, i.e.,
there exists a non-negative linear combination of inequalities in $A x \leq b$ that results in the inequality $h^T x \leq g$.
\label{lemma:farkasimplies}
\end{lemma}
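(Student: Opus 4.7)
The plan splits the biconditional into an easy direction and a hard direction. For $(\Leftarrow)$, the argument is immediate: given $y \geq 0^m$ with $y^T A = h^T$ and $y^T b \leq g$, I pick any $x \in \Rset{A x \leq b}$ and chain $h^T x = y^T A x \leq y^T b \leq g$, using componentwise nonnegativity of $y$ together with $A x \leq b$ in the middle step.

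For $(\Rightarrow)$, my plan is to reduce to Farkas' Lemma (Lemma~\ref{lemma:farkasunsat}) by tightening the implication. For every $\epsilon > 0$, the hypothesis forces the augmented system $\{A x \leq b\} \cup \{-h^T x \leq -(g+\epsilon)\}$ to be rationally infeasible. Farkas' Lemma then supplies nonnegative coefficients $y_\epsilon \in \mathbb{Q}^m$ and $z_\epsilon \in \mathbb{Q}$ with $y_\epsilon^T A - z_\epsilon h^T = 0^n$ and $y_\epsilon^T b - z_\epsilon(g+\epsilon) < 0$. A quick case split rules out $z_\epsilon = 0$: that case would make $y_\epsilon$ itself a Farkas certificate of infeasibility for $A x \leq b$, contradicting $\Rset{A x \leq b} \neq \emptyset$. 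Rescaling by $z_\epsilon > 0$ then yields $\hat y_\epsilon := y_\epsilon / z_\epsilon \geq 0^m$ with $\hat y_\epsilon^T A = h^T$ and $\hat y_\epsilon^T b < g + \epsilon$.

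The main obstacle will be upgrading these $\epsilon$-approximate certificates into a single $y$ with the sharp inequality $y^T b \leq g$. I plan to resolve it by a polyhedral argument on $P := \{y \in \mathbb{Q}^m : y \geq 0^m,\ y^T A = h^T\}$. The preceding step shows $P$ is nonempty and $\inf_{y \in P} y^T b \leq g$. A second application of Farkas excludes an objective that is unbounded below on $P$: a recession direction $y^{\mathrm{rec}} \geq 0^m$ with $(y^{\mathrm{rec}})^T A = 0^n$ and $(y^{\mathrm{rec}})^T b < 0$ would be a Farkas certificate witnessing $\Rset{A x \leq b} = \emptyset$. Since $P$ is a rational polyhedron on which the linear objective $y \mapsto y^T b$ is bounded below, the infimum is attained at a vertex, which is the desired $y$. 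Equivalently, this final step is strong LP duality between the primal $\max\{h^T x : A x \leq b\}$ and the dual $\min\{y^T b : y \in P\}$, so any textbook derivation of strong duality from Farkas can be substituted for the polyhedral argument.
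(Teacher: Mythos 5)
Your proof is correct. Note that the paper itself offers no proof of this lemma: it is stated as ``just a reformulation of Farkas' Lemma'' (this is the standard affine, or implication, form of Farkas' lemma) and used as a black box. Your derivation of it from the homogeneous form (Lemma~\ref{lemma:farkasunsat}) is a legitimate and fairly standard route: the $(\Leftarrow)$ direction is the one-line chain $h^T x = y^T A x \leq y^T b \leq g$; for $(\Rightarrow)$, the $\epsilon$-tightening correctly produces, for each $\epsilon>0$, a certificate $\hat y_\epsilon \geq 0^m$ with $\hat y_\epsilon^T A = h^T$ and $\hat y_\epsilon^T b < g+\epsilon$, and your case split ruling out $z_\epsilon = 0$ is exactly where the hypothesis $\Rset{A x \leq b} \neq \emptyset$ is needed. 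The only non-elementary ingredient is the final attainment step: that a linear objective bounded below on the nonempty rational polyhedron $P = \{y \geq 0^m : y^T A = h^T\}$ attains its infimum (here $P$ is pointed, being contained in the nonnegative orthant, so a minimizing vertex exists and is rational), and your recession-direction argument correctly reduces unboundedness of the dual objective to a Farkas certificate of primal infeasibility. You rightly observe this last step is interchangeable with strong LP duality; either way it is a textbook fact and not a gap. In short, your argument supplies a proof the paper omits, at the cost of importing one standard polyhedral attainment result beyond Lemma~\ref{lemma:farkasunsat} itself.
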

%

As we mentioned in the introduction, this paper describes equisatisfiable transformations for constraint systems. 
We transform the systems in such a way that most linear mixed decision procedures become terminating and still retain their general efficiency. 
We even show this on the example of branch-and-bound. 
Although we do not have the time to discuss all facets of branch-and-bound~\cite{Schrijver:86}, we still want to give a short summary of the algorithm. 
Branch-and-bound is a recursive algorithm that computes mixed solutions for constraint systems. 
In each call of the algorithm, it first computes a rational solution $s$ to a constraint system $A x \leq b$\footnote{A rational solution can be computed in polynomial time~\cite{Schrijver:86}.}. 
If there are none, then we know that $A x \leq b$ has no mixed solution. 
We are also done in the case that $s$ is a mixed solution. 
Otherwise, we select one of the integer variables $x_i$ assigned to a fractional value $s_i \not\in \mathbb{Z}$ and call branch-and-bound recursively on $(A x \leq b) \cup (x_i \geq \lceil s_i \rceil)$ and $(A x \leq b) \cup (x_i \leq \lfloor s_i \rfloor)$. 
If none of the recursive calls returns a mixed solution, then $A x \leq b$ also does not have a mixed solution. 
Likewise, if one of them returns a mixed solution $s$, then it also is a mixed solution to $A x \leq b$.

\begin{figure}[t]
    \subfloat[]{
	  \includegraphics[width=0.32\textwidth]{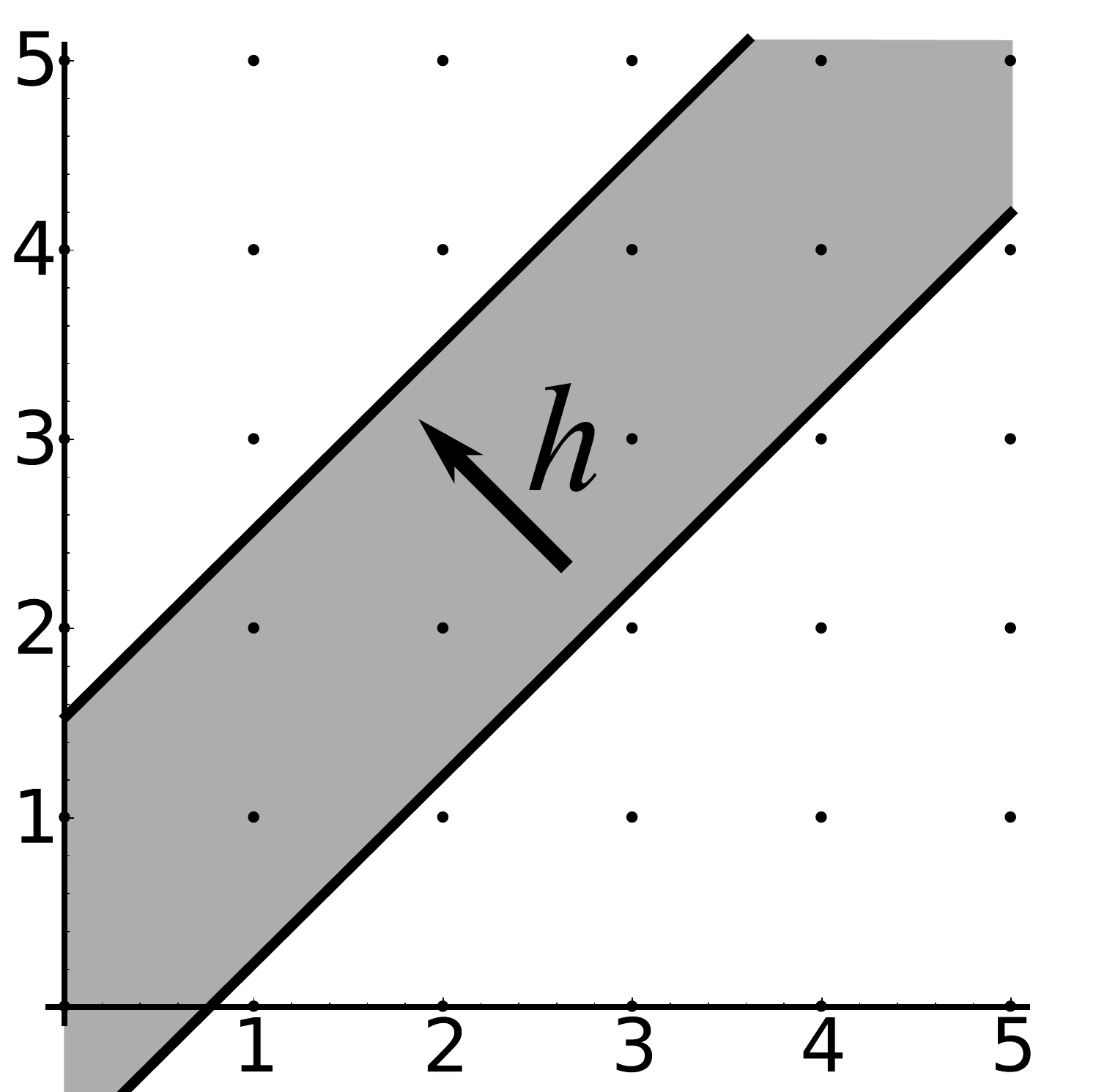}
      \label{fig:boundeddirection}
	}
	\subfloat[]{
      \includegraphics[width=0.32\textwidth]{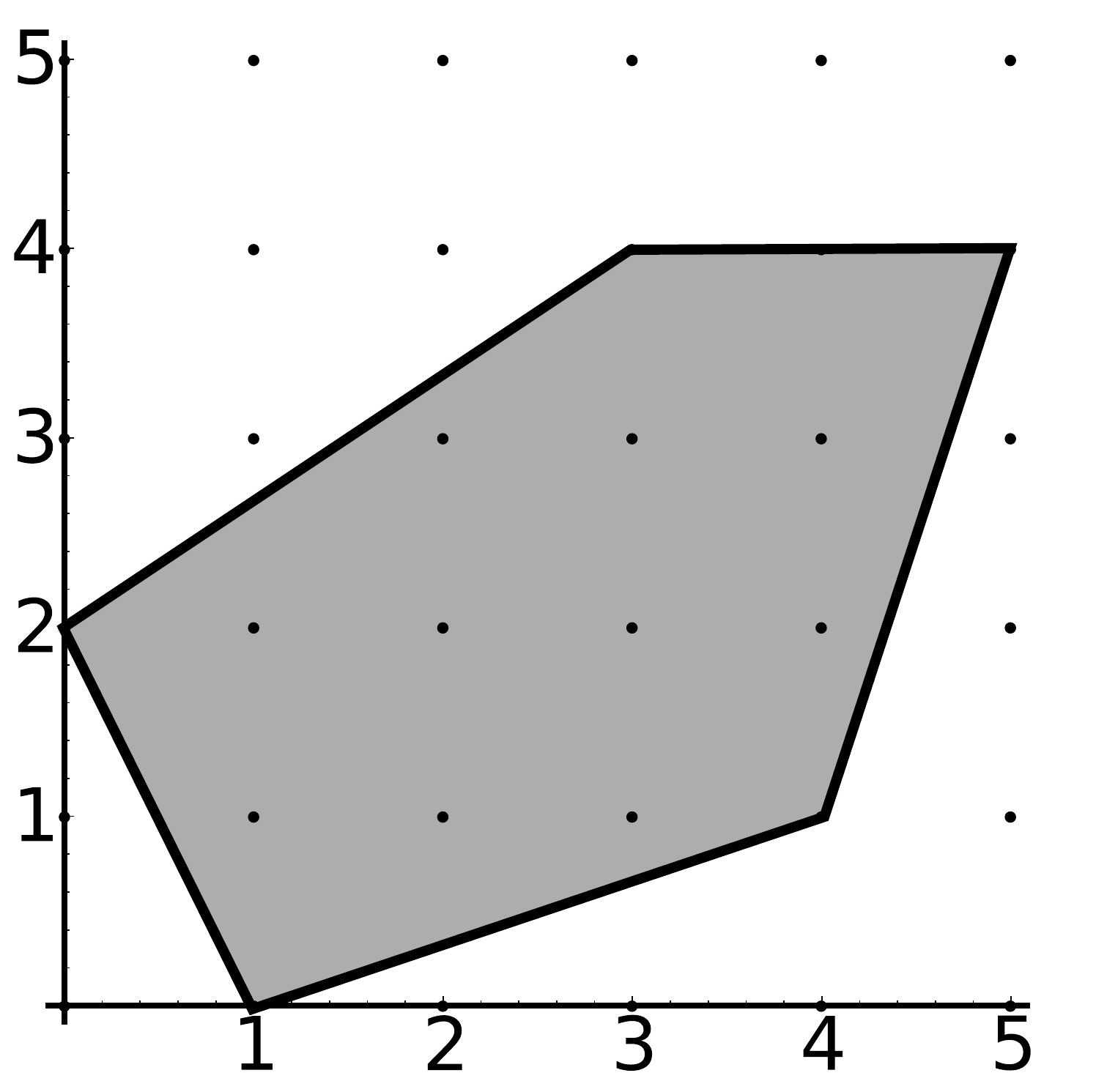}
	  
      \label{fig:boundedsystem}
	}
	\subfloat[]{
	  \includegraphics[width=0.32\textwidth]{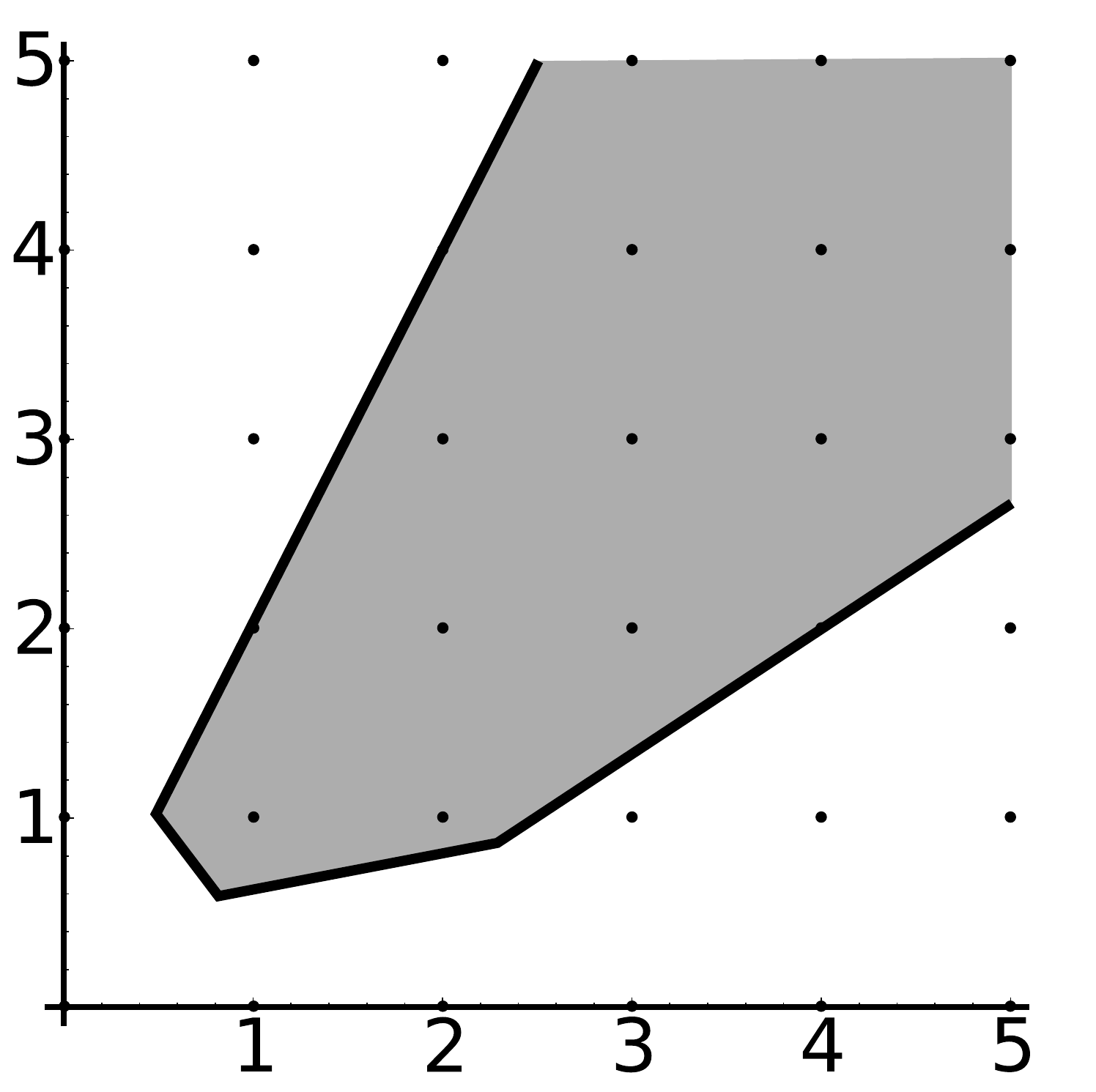}
      \label{fig:absunboundedsystem}
	}
	\caption{\textbf{a:} a partially bounded system; the directions $h = (-1,1)^T$ and $-h$ are the only bounded directions in the example. \textbf{b:} a bounded system; all directions are bounded. \textbf{c:} an absolutely unbounded system; all directions are unbounded.}
\end{figure}

Branch-and-bound alone is already complete on bounded constraint systems, i.e., systems where all directions are bounded:

\begin{definition}[Bounded Direction]
A \emph{direction}/vector $h \in \mathbb{Q}^n \setminus \{0^n\}$ is \emph{bounded} in the constraint system $A x \leq b$ if there exist $l, u \in \mathbb{Q}$ such that $A x \leq b$ implies $h^T x \leq u$ and $-h^T x \leq -l$. Otherwise, it is called unbounded. (See Figure~\ref{fig:boundeddirection} for an example.)
\label{def:boundeddir}
\end{definition}

\begin{definition}[Bounded System]
A constraint \emph{system} $A x \leq b$ is \emph{bounded} if all directions $h \in \mathbb{Q}^n \setminus \{0^n\}$ are bounded. Otherwise, it is called unbounded. (See Figure~\ref{fig:boundedsystem} for an example.)
\label{def:boundedsys}
\end{definition}

For bounded systems, branch-and-bound is one of the most popular and efficient algorithms. 
It may, however, diverge if the system has unbounded directions. 
Even so, not all unbounded systems are equally difficult. 
For instance, a system where all directions are unbounded has always a mixed solution:

\begin{lemma}[Absolutely Unbounded~\cite{BrombergerWeidenbach:16}]
If all directions are unbounded in a constraint \emph{system} $A x \leq b$, then the constraint system has an integer solution. (See Figure~\ref{fig:absunboundedsystem} for an example.)
\label{lemma:absoluteunbounded}
\end{lemma}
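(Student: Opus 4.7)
The plan is to show that the hypothesis forces the rational recession cone $C = \{d \in \mathbb{Q}^n : A d \leq 0^m\}$ to contain a direction $d^*$ with $A d^* < 0^m$ strictly, and then to round a sufficiently far-translated rational feasible point componentwise into $\mathbb{Z}^n$.

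First, I would observe that if every direction is unbounded then $\Rset{A x \leq b}$ must be non-empty (otherwise any inequality $h^T x \leq u$ would hold vacuously and every $h$ would be bounded). Moreover, by LP duality (Lemma~\ref{lemma:farkasimplies} applied to the objective $h^T x$), unboundedness of $h$ in $A x \leq b$ is equivalent to the existence of a $d \in C$ with $h^T d \neq 0$. Applying this to the row vector $h = a_i$ — which is nonzero by the preprocessing assumption — yields $d_i \in C$ with $a_i^T d_i \neq 0$, and since $d_i \in C$ already gives $a_i^T d_i \leq 0$, we must have $a_i^T d_i < 0$. The cone $C$ is closed under nonnegative sums, so $d^* := d_1 + \cdots + d_m$ again lies in $C$, and for every row index $j$ we get $a_j^T d^* = \sum_i a_j^T d_i \leq a_j^T d_j < 0$. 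This is the strictly interior rational recession direction I want.

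Next, I would pick any rational point $x_0 \in \Rset{A x \leq b}$ and translate $x_\lambda := x_0 + \lambda d^*$ for $\lambda > 0$, which gives $a_i^T x_\lambda \leq b_i + \lambda\, a_i^T d^* = b_i - \lambda|a_i^T d^*|$. Choosing $\lambda$ large enough so that $\lambda |a_i^T d^*| \geq \tfrac{1}{2}\|a_i\|_1$ for every $i$ leaves a slack of at least $\tfrac{1}{2}\|a_i\|_1$ in each constraint. Componentwise rounding of $x_\lambda$ to the nearest integer vector $y \in \mathbb{Z}^n$ satisfies $\|y - x_\lambda\|_\infty \leq \tfrac{1}{2}$, hence $|a_i^T(y - x_\lambda)| \leq \tfrac{1}{2}\|a_i\|_1$, and therefore $a_i^T y \leq b_i$ for all $i$. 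The resulting $y$ is an integer solution, which is a fortiori a mixed solution in $\mathbb{Q}^{n_1} \times \mathbb{Z}^{n_2}$.

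The main obstacle is the first step: converting the qualitative hypothesis that \emph{every} direction is unbounded into the quantitative statement that a \emph{single} recession direction $d^*$ simultaneously creates strict slack in every constraint. The delicate point is that for a given $h = a_i$ the lemma only hands back some $d_i$ with $a_i^T d_i \neq 0$ — it is the sign information forced by $d_i \in C$ together with the absence of zero rows that turns this into $a_i^T d_i < 0$, after which the summation trick assembles the global strictly interior direction. Once $d^*$ is in hand, the translate-then-round step is a routine arithmetic calculation, and rationality of the $d_i$ guarantees rationality of $x_\lambda$ so that the componentwise rounding is well defined.
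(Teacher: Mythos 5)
Your proof is correct. The paper does not prove this lemma itself---it imports it from the cited reference---but your argument is essentially a reconstruction of that reference's unit-cube-test proof: your condition that each constraint retain slack at least $\tfrac{1}{2}\|a_i\|_1$ is exactly the condition that a unit hypercube centered at $x_\lambda$ fits inside the polyhedron, and your use of Lemma~\ref{lemma:boundsandequalities} (after noting non-emptiness) to extract a strictly interior recession direction $d^*$ is the standard way infinite lattice width is established. No gaps.
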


In a previous article, we described two cube tests that detect and solve constraint systems with infinite lattice width (another name for absolutely unbounded systems) in polynomial time~\cite{BrombergerWeidenbach:16}. 
The case of absolutely unbounded systems is, therefore, trivial and branch-and-bound can be easily extended so it also becomes complete for absolutely unbounded systems. 
The actual difficult case is when some directions are bounded and others unbounded. 
We call these systems \emph{partially unbounded}. 
Here, branch-and-bound and most other algorithms diverge or become inefficient in practice. 
The transformations, which we present, are designed to efficiently handle this subclass of problems.

\section{Mixed-Echelon-Hermite Transformation}
\label{SE:MixedEchelonHermiteTransformation}
\label{SE:Mixed-Echelon-Hermite Transformation}
\label{SE:doublebounded}
\label{SE:Double-Bounded Constraint Systems}

Our overall goal is to present an equisatisfiable transformation that turns any constraint system into a system that is bounded, i.e., a system on which branch-and-bound and many other arithmetic decision procedures terminate. 
In this section, we only present such a transformation for a subset of constraint systems, which we call \emph{double-bounded constraint systems}. 
We then show in the next section that each constraint system can be reduced to an equisatisfiable double-bounded system. 
We also show how to efficiently transform a mixed solution from the double-bounded reduction to a mixed solution for the original system.

\begin{definition}[Double-Bounded Constraint System]
A constraint system $D x \leq u$ is \emph{double-bounded} if $D x \leq u$ implies $D x \geq l$ for $l \in \mathbb{Q}^m$. 
For such a double-bounded system, we call the bounds $u$ the \emph{upper bounds} of $D x$ and the bounds $l$ the lower bounds of $D x$. 
Moreover, we typically write $l \leq D x \leq u$ instead of $D x \leq u$ although the \emph{lower bounds} $l$ are only implicit.
\end{definition}

Note that only the inequalities in a double-bounded constraint system are guaranteed to be bounded. Variables might still be unbounded. 
For instance, in the constraint system $1 \leq 3 x_1 - 3 x_2 \leq 2$ both inequalities are bounded but the variables $x_1$ and $x_2$ are not. 
Moreover, the above constraint system is also an example where branch-and-bound diverges. 
This means that even bounding all inequalities does not yet guarantee termination. 
So for our purposes, a double-bounded constraint system is still too complex. 

This changes, however, if we also require that the coefficient matrix $D$ of our constraint system is a \emph{lower triangular matrix with gaps}:

\begin{definition}[Lower Triangular Matrix with Gaps]
A matrix $A \in \mathbb{Q}^{m \times n}$ is \emph{lower triangular with gaps} if it holds for each column $j$ that $\piv{A}{j} > m$ or that $\piv{A}{j} < \piv{A}{k}$ for all columns $k$ with $j < k \leq n$, i.e., column $j$ either has only zero entries or all pivoting entries right of $j$ have a higher row index.\label{def:lowertriangulargap}
\end{definition}

A matrix is lower triangular if and only if the row indices of its pivots are strictly increasing, i.e., $\piv{A}{1} < \ldots < \piv{A}{n}$. 
If we also allow it to have gaps, only the row indices of pivots with non-zero columns have to be strictly increasing. 
Now we get termination for free because of our restrictions:

\begin{lemma}[Lower Triangular Double-Bounded Systems]
Let $D \in \mathbb{Q}^{m \times n}$ be a lower triangular matrix with gaps and $l \leq D x \leq u$ be a double-bounded constraint system. Then each variable $x_j$ is either bounded, i.e., $l \leq D x \leq u$ implies that $l'_j \leq x_j \leq u'_j$ or its column in $D$ has only zero entries.\label{lemma:LTDBS}
\end{lemma}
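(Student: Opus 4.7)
The plan is to prove the lemma by induction on the column index $j$ running from $1$ to $n$, exploiting the shape of the pivot row for each non-zero column.

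The induction hypothesis will be: for every non-zero column $k$ with $k < j$, there exist constants $l'_k, u'_k$ such that $l \le Dx \le u$ implies $l'_k \le x_k \le u'_k$. The base case is vacuous. For the inductive step, fix column $j$. If its column in $D$ consists only of zeros, there is nothing to show. Otherwise let $i := \piv{D}{j} \le m$. By Definition~\ref{def:lowertriangulargap}, for every non-zero column $k$ with $k > j$ we have $\piv{D}{k} > i$, so $d_{ik} = 0$; for zero columns $k > j$ we also have $d_{ik} = 0$ trivially. Hence the $i$-th row of $D$ has $d_{ik} = 0$ for all $k > j$, so the $i$-th pair of double bounds becomes
\[
  l_i \;\le\; \sum_{k=1}^{j-1} d_{ik} x_k \;+\; d_{ij} x_j \;\le\; u_i,
\]
which rearranges to
\[
  l_i - \sum_{k<j} d_{ik} x_k \;\le\; d_{ij} x_j \;\le\; u_i - \sum_{k<j} d_{ik} x_k .
\]

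Now I would invoke the induction hypothesis to bound the linear form $\sum_{k<j} d_{ik} x_k$: for zero columns $k < j$ we simply have $d_{ik} = 0$, and for non-zero columns $k < j$ the induction gives $l'_k \le x_k \le u'_k$, so taking the worst-case combination of signs of $d_{ik}$ produces finite constants $L, U$ with $L \le \sum_{k<j} d_{ik} x_k \le U$. Combined with the inequality above and the fact that $d_{ij} \neq 0$ (by choice of the pivot), dividing by $d_{ij}$ (and flipping the inequality if $d_{ij} < 0$) yields explicit rationals $l'_j, u'_j$ with $l'_j \le x_j \le u'_j$, closing the induction.

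There is no real obstacle; the only place care is needed is in the bookkeeping of the two cases (zero vs.\ non-zero columns with $k < j$) and in remembering that the $l_i$'s are only implicit — the bound $l_i \le d_i^T x$ is an \emph{implication} of $Dx \le u$, not an explicit inequality — but this is precisely the definition of double-boundedness and can be cited directly. The argument is entirely structural: the gap property forces each pivot row to talk about $x_j$ together only with earlier variables, so lower-triangular-with-gaps systems bound their variables one column at a time from left to right.
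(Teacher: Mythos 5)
Your proof is correct and follows essentially the same route as the paper's: induction on the column index, using the gap property to zero out the entries of the pivot row to the right of column $j$, and the induction hypothesis to bound the contribution of the earlier variables before solving for $x_j$. The only difference is cosmetic — you spell out the worst-case sign bookkeeping and the division by $d_{ij}$ that the paper delegates to ``bound propagation/refinement.''
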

\begin{proof}
Proof by induction. 
Assume that the above property already holds for all variables $x_k$ with $k < j$. 
Let $p = \piv{D}{j}$. If $p > m$, then the column $j$ of $D$ is zero and we are done.
If $p \leq m$, then the pivoting entry $d_{pj}$ of column $j$ is non-zero. 
Because of Definition~\ref{def:lowertriangulargap} and our induction hypothesis, this also means that each column $k$ with $k < j$ has either a zero entry in row $p$ or the variable $x_k$ is bounded by our induction hypothesis, i.e., $l \leq D x \leq u$ implies $l'_k \leq x_k \leq u'_k$. 
Since Definition~\ref{def:lowertriangulargap} also implies that row $p$ has only zero entries to the right of $d_{pj}$, 
the row $p$ has only one unbounded variable with a non-zero entry, viz., $x_j$. 
This means we can transform the row $l_p \leq d_{p}^T x \leq u_p$ into the following two inequalities: 
$l_p - \sum_{k = 1}^{j-1}d_{pk} x_k \leq d_{pj} x_j$ and $u_p - \sum_{k = 1}^{j-1}d_{pk} x_k \geq d_{pj} x_j$, where the variables $x_k$ on the left sides are either bounded or $d_{pk} = 0$. 
Hence, we can derive an upper and lower bound for $x_j$ via bound propagation/refinement~\cite{JovanovicdeMoura:13}.
\end{proof}

\begin{corollary}[BnB-LTDB-Termination]
Branch-and-bound terminates on every double-bounded system $l \leq D x \leq u$ where $D$ is lower triangular with gaps.\label{corollary:BNBLTDB}
\end{corollary}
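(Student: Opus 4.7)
The plan is to apply Lemma~\ref{lemma:LTDBS} directly and reduce the corollary to a standard finite-search-space argument. Lemma~\ref{lemma:LTDBS} guarantees that for a lower triangular (with gaps) double-bounded system every variable $x_j$ is either bounded, with derivable bounds $l'_j \leq x_j \leq u'_j$, or has an all-zero column in $D$. I would treat the two cases separately.

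For a bounded integer variable $x_j$, I would track how its active bounds evolve along any single root-to-leaf path in the recursion tree. The lower bound on $x_j$ is monotonically non-decreasing and its upper bound is monotonically non-increasing, because each recursive branch-and-bound call only adds inequalities. Any branching step on $x_j$ appends either $x_j \geq \lceil s_j \rceil$ or $x_j \leq \lfloor s_j \rfloor$ for some fractional $s_j$, and in both cases the newly introduced bound is an integer. Hence after at most two branchings on $x_j$ both active bounds are integer, and from then on every further branching strictly tightens one of them by at least one integer unit. Since the bounds stay inside the finite interval $[\lceil l'_j \rceil, \lfloor u'_j \rfloor]$ supplied by Lemma~\ref{lemma:LTDBS}, branch-and-bound can branch on $x_j$ only finitely often along the path.

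For a zero-column variable, $x_j$ does not appear in any inequality of $l \leq D x \leq u$, so it never affects feasibility. If $x_j$ is integer, any rational value the underlying LP solver returns for it may be replaced by an arbitrary integer (say $0$) without breaking feasibility, so we may assume without loss of generality that branch-and-bound never branches on such variables; alternatively, one drops zero columns in a trivial preprocessing step. Combining both cases, along every root-to-leaf path each of the finitely many integer variables is branched on only finitely often, so every path has finite length. Because branching is binary, König's lemma now yields a finite recursion tree, and branch-and-bound terminates.

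The main obstacle I expect is precisely the zero-column bookkeeping. Such variables are technically unbounded and, left untreated, a naive LP solver that returns fractional values for them would trigger an infinite chain of branchings on a single variable. The argument therefore has to rest either on a mild normalization assumption about the LP solver (integer assignment to variables absent from the constraints) or on an explicit preprocessing step that eliminates all-zero columns of $D$ before branch-and-bound is invoked; everything else is routine.
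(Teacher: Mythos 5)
Your argument is correct and is essentially the proof the paper intends: the paper states this corollary without proof as an immediate consequence of Lemma~\ref{lemma:LTDBS}, relying on the standard fact (mentioned just before Definition~\ref{def:boundeddir}) that branch-and-bound terminates once every integer variable has derivable finite bounds, which is exactly your interval-shrinking plus K\H{o}nig's-lemma argument. Your observation about zero-column variables is a legitimate bookkeeping point that the paper glosses over; it is resolved exactly as you suggest, by dropping zero columns (these variables are then handled in the unbounded part via Lemma~\ref{lemma:mixedsoundness}) or by the solver assigning integer values to variables absent from all constraints.
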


Our next goal is to efficiently transform every double-bounded system $l \leq D x \leq u$ into an equisatisfiable system that also has a lower triangular coefficient matrix with gaps. 
We start by defining a class of transformations that do not only preserve mixed equisatisfiability, but are also very expressive.

\begin{definition}[Mixed Column Transformation Matrix~\cite{ChristHoenicke2015}]
Given a mixed constraint system. 
A matrix $V \in \mathbb{Q}^{n \times n}$ is a \emph{mixed column transformation matrix} if it is invertible and consists of an invertible matrix $V_{(\mathbb{Q})} \in \mathbb{Q}^{n_1 \times n_1}$, a unimodular matrix $V_{(\mathbb{Z})} \in \mathbb{Z}^{n_2 \times n_2}$, and a matrix $V_{(M)} \in \mathbb{Q}^{n_1 \times n_2}$ such that
\centermath{V = \left( \begin{array}{l l}
V_{(\mathbb{Q})} &V_{(M)}\\
0^{n_2 \times n_1} &V_{(\mathbb{Z})}
\end{array} \right).}\label{def:MCTM}
\end{definition}

The formal definition of mixed column transformation matrices may seem anything but intuitive. 
However, they actually describe a straightforward class of transformations, viz., any combination of \emph{mixed equisatisfiable column transformations} that can be performed on a matrix $A$. 
Mixed equisatisfiable column transformations are either (i) multiplying a column by $-1$; (ii) the swapping of two columns $i, j$ of the same type, i.e., $i,j \leq n_1$ or $i,j > n_1$; (iii) multiplying a rational column $j$ (i.e., $j \leq n_1$) with a non-zero rational factor; (iv) adding a rational multiple of a column $j$ with $j \leq n_1$ to any other column $i$; and (v) adding an integer multiple of a column $j > n_1$ to a different column $i$ with $i > n_1$.
If we perform the same mixed equisatisfiable column transformations that resulted in $H$ from $A$ to an $n \times n$ identity matrix, then the transformed identity matrix $V$ is a mixed column transformation matrix and $H = A V$. 
We just compacted the column transformations into a matrix. 
We cannot just use $V$ to redo the column transformations, but we can also use its inverse to undo them:

\begin{lemma}[Mixed Column Transformation Inversion~\cite{ChristHoenicke2015}]
Given a mixed constraint system. 
Let $V \in \mathbb{Q}^{n \times n}$ be a mixed column transformation matrix. 
Then $V^{-1}$ is also a mixed column transformation matrix.\label{lemma:MCTINV}
\end{lemma}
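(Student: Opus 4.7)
The plan is to prove the lemma by writing down $V^{-1}$ explicitly via the block-triangular inverse formula and then verifying that each of its blocks satisfies the requirements of Definition~\ref{def:MCTM}.

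First, since $V$ is block upper-triangular with invertible diagonal blocks $V_{(\mathbb{Q})}$ and $V_{(\mathbb{Z})}$, its inverse exists and is given by
\[
V^{-1} \;=\; \begin{pmatrix} V_{(\mathbb{Q})}^{-1} & -\,V_{(\mathbb{Q})}^{-1}\,V_{(M)}\,V_{(\mathbb{Z})}^{-1} \\ 0^{n_2 \times n_1} & V_{(\mathbb{Z})}^{-1} \end{pmatrix},
\]
which can be checked in one line by multiplying the two block matrices and observing that the result is the $n \times n$ identity. So $V^{-1}$ has exactly the shape required by Definition~\ref{def:MCTM}, with the lower-left block being zero as needed.

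Next I would verify the three conditions on the blocks. The top-left block $V_{(\mathbb{Q})}^{-1}$ is an invertible matrix in $\mathbb{Q}^{n_1 \times n_1}$ since $V_{(\mathbb{Q})} \in \mathbb{Q}^{n_1 \times n_1}$ is invertible by hypothesis. The top-right block $-V_{(\mathbb{Q})}^{-1}V_{(M)}V_{(\mathbb{Z})}^{-1}$ is a product of rational matrices of the appropriate sizes, so it lies in $\mathbb{Q}^{n_1 \times n_2}$. Finally, $V^{-1}$ itself is invertible with inverse $V$, so it is indeed an invertible $n \times n$ matrix.

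The only substantive step, and therefore the main obstacle, is showing that the bottom-right block $V_{(\mathbb{Z})}^{-1}$ is again unimodular (not just rational). This follows from the adjugate formula $V_{(\mathbb{Z})}^{-1} = \det(V_{(\mathbb{Z})})^{-1} \operatorname{adj}(V_{(\mathbb{Z})})$: the adjugate of an integer matrix is integer, and since $V_{(\mathbb{Z})}$ is unimodular we have $\det(V_{(\mathbb{Z})}) = \pm 1$, so the scalar factor is $\pm 1$ and $V_{(\mathbb{Z})}^{-1} \in \mathbb{Z}^{n_2 \times n_2}$. Its determinant equals $\det(V_{(\mathbb{Z})})^{-1} = \pm 1$, so $V_{(\mathbb{Z})}^{-1}$ is unimodular as required, which completes all conditions of Definition~\ref{def:MCTM} and hence the proof.
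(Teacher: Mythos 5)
Your proof is correct: the block upper-triangular inverse formula gives the required shape directly, and the adjugate argument correctly shows that $V_{(\mathbb{Z})}^{-1}$ is again unimodular, which is the only non-trivial point. The paper itself states this lemma without proof, importing it from~\cite{ChristHoenicke2015}, and your argument is exactly the standard one that is intended there, so there is nothing further to compare.
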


This means that each mixed column transformation matrix defines a bijection from $(\mathbb{Q}^{n_1} \times \mathbb{Z}^{n_2})$ to $(\mathbb{Q}^{n_1} \times \mathbb{Z}^{n_2})$. 
Hence, they guarantee mixed equisatisfiability:

\begin{lemma}[Mixed Column Transformation Equisatisfiability~\cite{ChristHoenicke2015}]
Let $A x \leq b$ be a mixed constraint system. Let $V \in \mathbb{Q}^{n \times n}$ be a mixed column transformation matrix. Then every solution $y \in \Mset{(A V) y \leq b})$ can be converted into a solution $V y = x \in \Mset{A x \leq b}$ and vice versa.\label{lemma:MCTEQS}
\end{lemma}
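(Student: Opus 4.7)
The plan is to prove both directions at once by exhibiting the two bijective maps $y \mapsto Vy$ and $x \mapsto V^{-1} x$, and verifying that each one sends mixed tuples to mixed tuples while preserving the constraints. The preservation of the constraints themselves is purely algebraic: $A(Vy) = (AV)y$ and $(AV)(V^{-1}x) = Ax$ by associativity of matrix multiplication, so the only real content is showing that these maps respect the typing, i.e., send $\mathbb{Q}^{n_1} \times \mathbb{Z}^{n_2}$ to itself.

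First I would split any $y \in \mathbb{Q}^{n_1} \times \mathbb{Z}^{n_2}$ as $y = (y_{(1)}, y_{(2)})^T$ with $y_{(1)} \in \mathbb{Q}^{n_1}$ and $y_{(2)} \in \mathbb{Z}^{n_2}$, and compute using the block form from Definition~\ref{def:MCTM}:
\[
V y \;=\; \begin{pmatrix} V_{(\mathbb{Q})} y_{(1)} + V_{(M)} y_{(2)} \\ V_{(\mathbb{Z})} y_{(2)} \end{pmatrix}.
\]
The top block lies in $\mathbb{Q}^{n_1}$ trivially (rational matrices times rational/integer vectors), and the bottom block lies in $\mathbb{Z}^{n_2}$ because $V_{(\mathbb{Z})} \in \mathbb{Z}^{n_2 \times n_2}$ and $y_{(2)} \in \mathbb{Z}^{n_2}$. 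So $Vy \in \mathbb{Q}^{n_1} \times \mathbb{Z}^{n_2}$, and then $A(Vy) = (AV)y \leq b$ shows $Vy \in \Mset{A x \leq b}$.

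For the converse, I would invoke Lemma~\ref{lemma:MCTINV}: $V^{-1}$ is again a mixed column transformation matrix, so it has exactly the same block shape (with a unimodular lower-right integer block and a zero lower-left block). Repeating the block computation above with $V^{-1}$ in place of $V$ and $x \in \Mset{A x \leq b}$ in place of $y$ shows that $V^{-1} x \in \mathbb{Q}^{n_1} \times \mathbb{Z}^{n_2}$, and $(AV)(V^{-1}x) = Ax \leq b$ shows $V^{-1}x \in \Mset{(AV)y \leq b}$. Since $V V^{-1} = V^{-1} V = I$, the two maps are mutually inverse, establishing the claimed bijection.

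The only subtle point is why the bottom block of $Vy$ is really integer; this is where the definitional requirement that $V_{(\mathbb{Z})} \in \mathbb{Z}^{n_2 \times n_2}$ (and not merely rational) matters. The unimodularity itself is not needed for this direction, but it is essential to invoke Lemma~\ref{lemma:MCTINV} for the converse, because without unimodularity of $V_{(\mathbb{Z})}$ the inverse would generally not have integer entries in its lower-right block. So there is no real obstacle; the proof is a two-line block-matrix verification once Lemma~\ref{lemma:MCTINV} is available.
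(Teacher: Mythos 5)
Your proof is correct and follows exactly the route the paper intends: the paper itself only sketches this cited result by observing that $V$ (and, via Lemma~\ref{lemma:MCTINV}, $V^{-1}$) defines a bijection of $\mathbb{Q}^{n_1}\times\mathbb{Z}^{n_2}$ onto itself, and your block-matrix computation is precisely the verification of that claim. The observation that unimodularity of $V_{(\mathbb{Z})}$ is only needed for the inverse direction is a nice touch, but otherwise there is nothing to add.
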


Moreover, the mixed column transformation matrix $V$ also establishes a direct relationship between the linear combinations of the original constraint system and the transformed one:

\begin{lemma}[Mixed Column Transformation Implications]
Let $A x \leq b$ be a constraint system.
Let $V \in \mathbb{Q}^{n \times n}$ be a mixed column transformation matrix. 
Let $A x \leq b$ imply $h^T x \leq g$. Then $A V z \leq b$ implies $h^T V z \leq g$.\label{lemma:MCTIMP}
\end{lemma}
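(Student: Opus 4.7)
The plan is to give a direct substitution proof, which is essentially one line once set up correctly. Recall that ``$A x \leq b$ implies $h^T x \leq g$'' is defined over the rational solution set $\Rset{A x \leq b}$, so I need to show that for every $z \in \mathbb{Q}^n$ satisfying $A V z \leq b$, the inequality $h^T V z \leq g$ holds.

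Let $z \in \mathbb{Q}^n$ be an arbitrary rational solution of $A V z \leq b$. Since $V \in \mathbb{Q}^{n \times n}$, the vector $x := V z$ also lies in $\mathbb{Q}^n$. By associativity of matrix-vector multiplication, $A x = A (V z) = (A V) z \leq b$, so $x \in \Rset{A x \leq b}$. By the hypothesis that $A x \leq b$ implies $h^T x \leq g$, we then obtain $h^T V z = h^T x \leq g$. Since $z$ was arbitrary, this proves the implication.

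Note that this argument only uses that $V$ has rational entries; the additional ``mixed'' block structure of Definition~\ref{def:MCTM} is not needed here, which is expected because the statement concerns rational (not mixed) implication. If one preferred a Farkas-style proof, one could instead invoke Lemma~\ref{lemma:farkasimplies} to obtain $y \geq 0^m$ with $y^T A = h^T$ and $y^T b \leq g$, and then observe that the same $y$ satisfies $y^T (A V) = (y^T A) V = h^T V$, which re-certifies the implication for the transformed system; the only subtlety in that approach is the side condition $\Rset{A x \leq b} \neq \emptyset$, which forces a separate (trivial) case using Lemma~\ref{lemma:farkasunsat} when the original system is infeasible.

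I do not expect any real obstacle: the lemma is essentially a bookkeeping statement that the substitution $x \mapsto V z$ is well-defined at the level of rational points and commutes with linear evaluation, so the main thing to be careful about is merely matching the definitions of ``implies'' and $\Rset{\cdot}$ used in the excerpt.
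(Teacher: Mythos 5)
Your proof is correct, but it takes a different route from the paper. The paper proves this lemma via the Farkas-style Linear Implication Lemma (Lemma~\ref{lemma:farkasimplies}): it extracts a non-negative combination $y$ with $y^T A = h^T$ and $y^T b \leq g$, multiplies by $V$ on the right to get $y^T A V = h^T V$, and concludes that the same $y$ certifies the implication for the transformed system --- exactly the alternative you sketch in your second paragraph. Your primary argument is instead the direct semantic one: for any $z$ with $AVz \leq b$, the point $x := Vz$ is a rational solution of $Ax \leq b$, so $h^T V z = h^T x \leq g$. Both are valid, and you correctly observe that only rationality of $V$ is needed, not the mixed block structure. Your substitution proof is arguably the cleaner of the two here: it sidesteps the side condition $\Rset{A x \leq b} \neq \emptyset$ that Lemma~\ref{lemma:farkasimplies} formally requires and that the paper's proof does not explicitly discharge. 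What the paper's certificate-based route buys is that it makes the transfer of the linear combination $y$ explicit, which is the same mechanism underlying the subsequent Corollary~\ref{corollary:MCTCERT} on converting certificates of unsatisfiability between the original and transformed systems; your semantic argument establishes the implication but does not by itself exhibit the transferred certificate.
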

\begin{proof}
By Lemma~\ref{lemma:farkasimplies}, $A x \leq b$ implies $h^T x \leq g$ iff there exists a non-negative linear combination $y \in \mathbb{Q}^n$ such that $y \geq 0$, $y^T A = h^T$ and $y^T b \leq g$. 
Multiplying $y^T A = h^T$ with $V$ results in $y^T A V = h^T V$ and thus $y$ is also the non-negative linear combination of inequalities $A V z \leq b$ that results in $h^T V z \leq g$.
\end{proof}

\begin{corollary}[Mixed Column Transformation Certificates]
Let $A x \leq b$ be a constraint system.
Let $V \in \mathbb{Q}^{n \times n}$ be a mixed column transformation matrix. 
Then $y$ is a certificate of unsatisfiability for $A x \leq b$ iff it is one for $A V z \leq b$.\label{corollary:MCTCERT}
\end{corollary}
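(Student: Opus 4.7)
The plan is to reduce the statement directly to Farkas' Lemma~\ref{lemma:farkasunsat} and exploit that $V$ is invertible by definition of a mixed column transformation matrix. By Lemma~\ref{lemma:farkasunsat}, $y$ is a certificate of unsatisfiability for $A x \leq b$ exactly when
\[
y \geq 0^m, \qquad y^T A = 0^n, \qquad y^T b < 0,
\]
and $y$ is a certificate for $A V z \leq b$ exactly when $y \geq 0^m$, $y^T (A V) = 0^n$, and $y^T b < 0$. The first and third conditions involve only $y$ and $b$ and are therefore identical for the two systems, so the entire statement collapses to showing $y^T A = 0^n \iff y^T A V = 0^n$.

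For the forward direction I would simply right-multiply $y^T A = 0^n$ by $V$ to obtain $y^T A V = 0^n$. For the backward direction I would use that $V$ is invertible, a fact built into Definition~\ref{def:MCTM} (and also reinforced by Lemma~\ref{lemma:MCTINV}): right-multiplying $y^T A V = 0^n$ by $V^{-1}$ yields $y^T A = 0^n$. This two-line equivalence completes the proof.

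Note that one cannot simply invoke Lemma~\ref{lemma:MCTIMP} as a black box, because that lemma is stated only for $h \neq 0^n$, whereas a certificate of unsatisfiability corresponds to the degenerate case $h = 0^n$ together with $g = y^T b < 0$. So the only mild subtlety to flag is that a separate, direct linear-algebra argument (rather than a rerouting through Lemma~\ref{lemma:MCTIMP}) is needed for this boundary case. Since that argument is just right-multiplication by $V$ and by $V^{-1}$, there is no real obstacle; the work has essentially already been done once invertibility of $V$ is recalled.
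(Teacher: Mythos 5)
Your proof is correct and matches the argument the paper intends: the corollary is stated without proof as a consequence of the same right-multiplication-by-$V$ step used in Lemma~\ref{lemma:MCTIMP}, with invertibility of $V$ (Definition~\ref{def:MCTM} / Lemma~\ref{lemma:MCTINV}) giving the converse direction, exactly as you argue via Farkas' Lemma~\ref{lemma:farkasunsat}. Your side remark is also well taken --- Lemma~\ref{lemma:MCTIMP} cannot be invoked as a black box here, both because it assumes $h \neq 0^n$ and because its proof routes through Lemma~\ref{lemma:farkasimplies}, which presupposes $\Rset{A x \leq b} \neq \emptyset$ --- so the direct two-line linear-algebra argument you give is the right way to handle this case.
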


Now we only need a mixed column transformation matrix $V$ for every coefficient matrix $A$ such that $H = AV$ is lower triangular with gaps. 
One such matrix $V$ is the one that transforms $A$ into \emph{Mixed-Echelon-Hermite normal form}:

\begin{definition}[Mixed-Echelon-Hermite Normal Form~\cite{ChristHoenicke2015}]
A matrix $H \in \mathbb{Q}^{m \times n}$ is in \emph{Mixed-Echelon-Hermite normal form} if 
\centermath{H = \left( \begin{array}{c c l c c}
E & \; & 0^{r \times (n_1 - r)} & \; & 0^{r \times n_2} \\
E' & \; & 0^{(m - r) \times (n_1 - r)} & \; & H' \\
\end{array} \right),}
where $E$ is an $r \times r$ identity matrix (with $r \leq n_1$), $E' \in \mathbb{Q}^{(m-r) \times r}$, and $H' \in \mathbb{Q}^{(m-r) \times n_2}$ is a matrix in hermite normal form, i.e., a lower triangular matrix without gaps, where each entry $h'_{{\piv{H'}{j}}k}$ in the row $\piv{H'}{j}$ is non-negative and smaller than $h'_{{\piv{H'}{j}}j}$).\label{def:MEHNF}
\end{definition}

The following proof for the existence of the Mixed-Echelon-Hermite normal form is constructive and presents the Mixed-Echelon-Hermite transformation.

\begin{lemma}[Mixed-Echelon-Hermite Transformation]
Let $A \in \mathbb{Q}^{m \times n}$ be a matrix, where the upper left $r \times n_1$ submatrix has the same rank $r$ as the complete left $m \times n_1$ submatrix. Then there exists a mixed transformation matrix $V \in \mathbb{Q}^{n \times n}$ such that $H = AV$ is in \emph{Mixed-Echelon-Hermite normal form}.\label{lemma:MEHNF}
\end{lemma}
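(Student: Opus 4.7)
\medskip\noindent\textbf{Proof plan.} The plan is to construct $V$ in three stages, each corresponding to a well-defined block of the mixed column transformation structure in Definition~\ref{def:MCTM}, and then observe that their composition has the required shape.

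\emph{Stage 1 (rational echelon reduction).} Working only on the first $n_1$ columns, I would first use column swaps among the rational columns to bring a set of $r$ linearly independent columns of the upper $r \times n_1$ block into positions $1,\dots,r$; by the rank assumption such a selection exists, and after this permutation the top-left $r \times r$ submatrix is invertible. Then I would apply rational column operations on these first $r$ columns (multiplications by nonzero rationals and additions of rational multiples of one rational column to another) to transform this top-left $r \times r$ submatrix into the identity $E$. Next, because the complete left $m \times n_1$ submatrix still has rank $r$ after these operations, every one of the remaining $n_1 - r$ rational columns is a rational linear combination of the first $r$; subtracting the corresponding combinations zeroes out columns $r+1,\dots,n_1$ entirely. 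All of these operations are permissible rational column operations and together yield an invertible matrix $V_{(\mathbb{Q})} \in \mathbb{Q}^{n_1 \times n_1}$.

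\emph{Stage 2 (clearing the top-right integer block).} After Stage~1 the matrix has the form
\[
\begin{pmatrix} E & 0^{r \times (n_1-r)} & A_{(M)} \\ E' & 0^{(m-r) \times (n_1-r)} & A_{(\mathbb{Z})} \end{pmatrix}.
\]
For each integer column $n_1 + k$ and each $i \leq r$, I would subtract $(A_{(M)})_{ik}$ times the $i$-th (rational) column from column $n_1 + k$. Since the top $r \times r$ block is the identity $E$, this zeroes out the top-right block $A_{(M)}$ without changing $E$ or the zero columns; it modifies only the lower-right block $A_{(\mathbb{Z})}$ into some new matrix $A'_{(\mathbb{Z})}$. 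Adding rational multiples of rational columns to integer columns is exactly the operation captured by the off-diagonal block $V_{(M)} \in \mathbb{Q}^{n_1 \times n_2}$, so these operations are permissible.

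\emph{Stage 3 (Hermite reduction of the integer block).} Finally, I would apply the standard Hermite normal form algorithm to $A'_{(\mathbb{Z})}$ using only unimodular column operations on the last $n_2$ columns (swaps, sign flips, and integer column additions). This produces the desired lower-triangular matrix without gaps $H'$ satisfying the nonnegativity and size conditions of Definition~\ref{def:MEHNF}. These operations contribute a unimodular $V_{(\mathbb{Z})} \in \mathbb{Z}^{n_2 \times n_2}$ and leave the first $n_1$ columns untouched, so in particular the zeroed-out top-right block remains zero.

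Composing the three stages gives a single matrix $V$ with exactly the block-triangular form of Definition~\ref{def:MCTM}, and invertibility follows from the invertibility of $V_{(\mathbb{Q})}$ and the unimodularity of $V_{(\mathbb{Z})}$. The main subtlety, and the step I would verify most carefully, is the preservation of block structure under composition: Stage~2 must not disturb the identity $E$ or introduce nonzero entries in the middle $n_1 - r$ zero columns, and Stage~3 must not touch any rational column. Both are guaranteed by the specific shape reached at the end of Stage~1, which is why the stages must be performed in this order. \qed
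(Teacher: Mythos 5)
Your proposal is correct and follows essentially the same route as the paper's proof: reduce the rational block to reduced column echelon form $(E\;0)$ (using the rank hypothesis to conclude the trailing $n_1-r$ columns vanish in all $m$ rows), clear the top-right block by adding rational-column multiples to the integer columns, and finish with a unimodular Hermite reduction of the remaining integer block. You merely unpack the paper's block-matrix products $V_{11}$, $-V_{11}(\begin{smallmatrix}A_{12}\\0\end{smallmatrix})V_{22}$, $V_{22}$ into explicit elementary column operations, which is an equivalent presentation.
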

\begin{proof}
Proof from~\cite{ChristHoenicke2015} with slight modifications so it also works for singular matrices. 
We subdivide $A$ into
\centermath{A = \left( \begin{array}{c c}
A_{11} & A_{12} \\
A_{21} & A_{22} \\
\end{array} \right) }
such that $A_{11} \in \mathbb{Q}^{r \times n_1}$, $A_{12} \in \mathbb{Q}^{r \times n_2}$, $A_{21} \in \mathbb{Q}^{m-r \times n_1}$, and $A_{21} \in \mathbb{Q}^{m-r \times n_2}$. 
Then we bring $A_{11}$ with an invertible matrix $V_{11} \in \mathbb{Q}^{n_1 \times n_1}$ into reduced echelon column form $H_{11} = (E \; 0^{r \times (n_1 - r)}) = A_{11} V_{11}$, where $E$ is an $r \times r$ identity matrix. 
We get $V_{11}$ and $H_{11}$ by using Bareiss algorithm instead of the better known Gaussian elimination as it is polynomial in time~\cite{Bareiss1968}.\footnote{\label{footnote:transformimpl}In our implementation, we do actually use less efficient, Gaussian elimination based transformations. 
The reason is that these transformations are incrementally efficient (see Appendix~\ref{SE:incremental}). 
Our experiments show that the transformation cost still remains negligible in practice.}
Note that the last $n_1-r$ columns of $H_{21} = (H'_{21}\; 0^{(m - r) \times (n_1 - r)}) = A_{21} V_{11}$ are also zero because all rows in $A_{21}$ are linear dependent of $A_{11}$ (due to the rank). 
Next we notice that 
\centermath{A_{12} - A_{11} V_{11} \left( \begin{array}{c} A_{12} \\ 0^{(n_1-r) \times n_2} \end{array} \right) = A_{12} - (E \; 0^{r \times (n_1 - r)}) \left( \begin{array}{c} A_{12} \\ 0^{(n_1-r) \times n_2} \end{array} \right) = 0^{r \times n_2}}
so we can reduce the upper right submatrix $A_{12}$ to zero by adding multiples of the $n_1$ columns with rational variables to the $n_2$ columns with integer variables. 
However, this also transforms the lower right submatrix $A_{22}$ into 
\centermath{H'_{22} = A_{22} - A_{21} V_{11} \left( \begin{array}{c} A_{12} \\ 0^{(n_1-r) \times n_2} \end{array} \right). }
Finally, we transform this new submatrix $H'_{22}$ into hermite normal form $H_{22}$ via the algorithm of Kannan and Bachem (or a similar polynomial time algorithm).\footnotemark[5] 
This algorithm also returns a unimodular matrix $V_{22} \in \mathbb{Z}^{n_2 \times n_2}$ such that $H_{22} = H'_{22} V_{22}$. 
To summarize: our total mixed transformation matrix is\\
$V = \left( \begin{array}{l c}
V_{11} & -V_{11} \cdot \left( \begin{array}{c} A_{12} \\ 0^{(n_1-r) \times n_2} \end{array} \right) \cdot V_{22} \\
0^{n_2 \times n_1} & V_{22} \\
\end{array} \right)$  and $H = AV = \left( \begin{array}{c c}
H_{11} & 0^{r \times n_2} \\
H_{21} & H_{22} \\
\end{array} \right).$
\end{proof}

It is not possible to transform every matrix $A \in \mathbb{Q}^{m \times n}$ into Mixed-Echelon-Hermite normal form. 
We have to restrict ourselves to matrices, where the upper left $r \times n_1$ submatrix has the same rank $r$ as the complete left $m \times n_1$ submatrix. 
However, this is very easy to accomplish for a system of linear mixed arithmetic constraints $l \leq A x \leq u$. 
The reason is that the order of inequalities does not change the set of satisfiable solutions. 
Hence, we can swap the inequalities and, thereby, the rows of $A$ until its upper left $r \times n_1$ submatrix has the desired form. 
This also means that there are usually multiple possible inequality orderings that each have their own Mixed-Echelon-Hermite normal form $H$.

To conclude this section: whenever we have a double-bounded constraint system $l \leq D x \leq u$, we can transform it (after some row swapping) into an equisatisfiable system $l \leq H y \leq u$ where $H = D V$ is in Mixed-Echelon-Hermite normal form and $V y = x$. 
Since $H$ is also a lower triangular matrix with gaps, branch-and-bound terminates on $l \leq H y \leq u$ with a mixed solution $t$ or it will return unsatisfiable (Corollary~\ref{corollary:BNBLTDB}). 
Moreover, we can convert any mixed solution $t$ for $l \leq H y \leq u$ into a mixed solution $s$ for $l \leq D x \leq u$ by setting $s := V t$.
Hence, we have a complete algorithm for double-bounded constraint systems.

\section{Double-Bounded Reduction}
\label{SE:Double-Bounded Reduction}
\label{SE:DoubleBoundedReduction}
\label{SE:Splitting Bounded and Unbounded Inequalities}
\label{SE:splitting}

In the previous Section, we have shown how to solve a double-bounded constraint system. 
Now we show how to reduce any constraint system $A' x \leq b'$ to an equisatisfiable double-bounded system $l \leq D x \leq u$. 
Moreover, we explain how to take any solution of $l \leq D x \leq u$ and turn it into a solution for $A' x \leq b'$. 

\begin{figure}[t]
    \subfloat[]{
	  \includegraphics[width=0.32\textwidth]{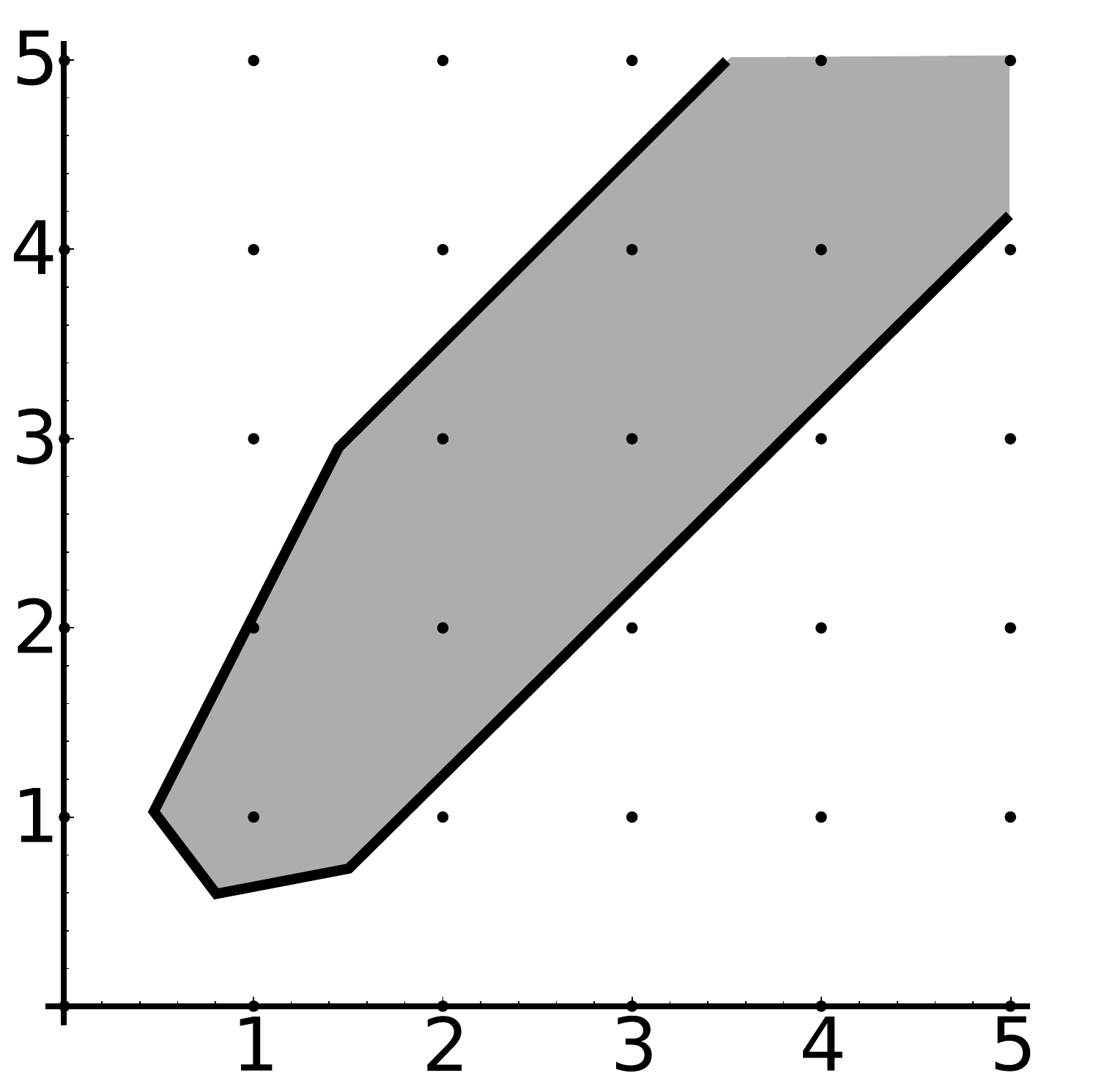}
      \label{fig:splitsystem}
	}
	\subfloat[]{
      \includegraphics[width=0.32\textwidth]{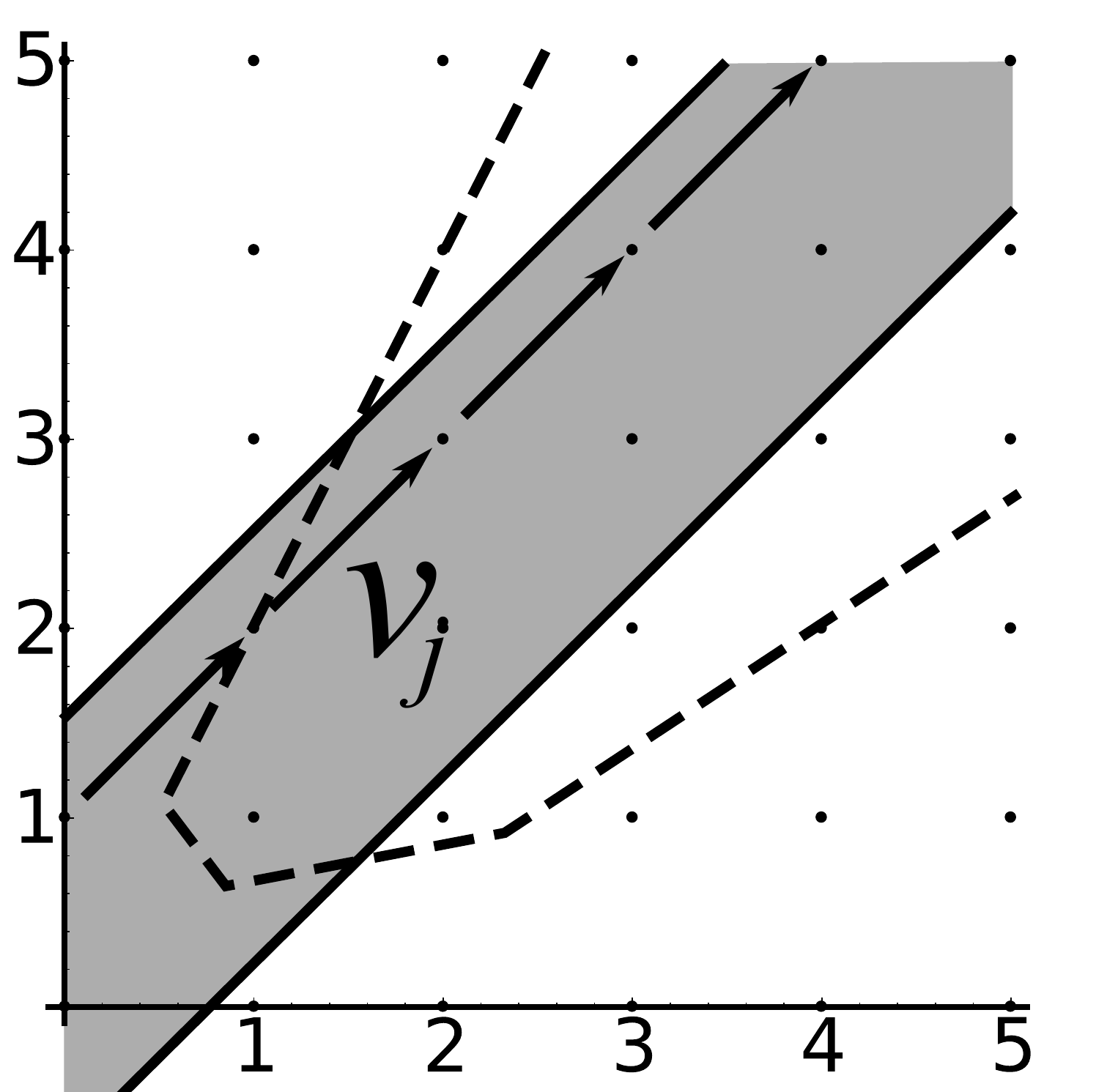}
	  
      \label{fig:boundedpart}
	}
	\subfloat[]{
	  \includegraphics[width=0.32\textwidth]{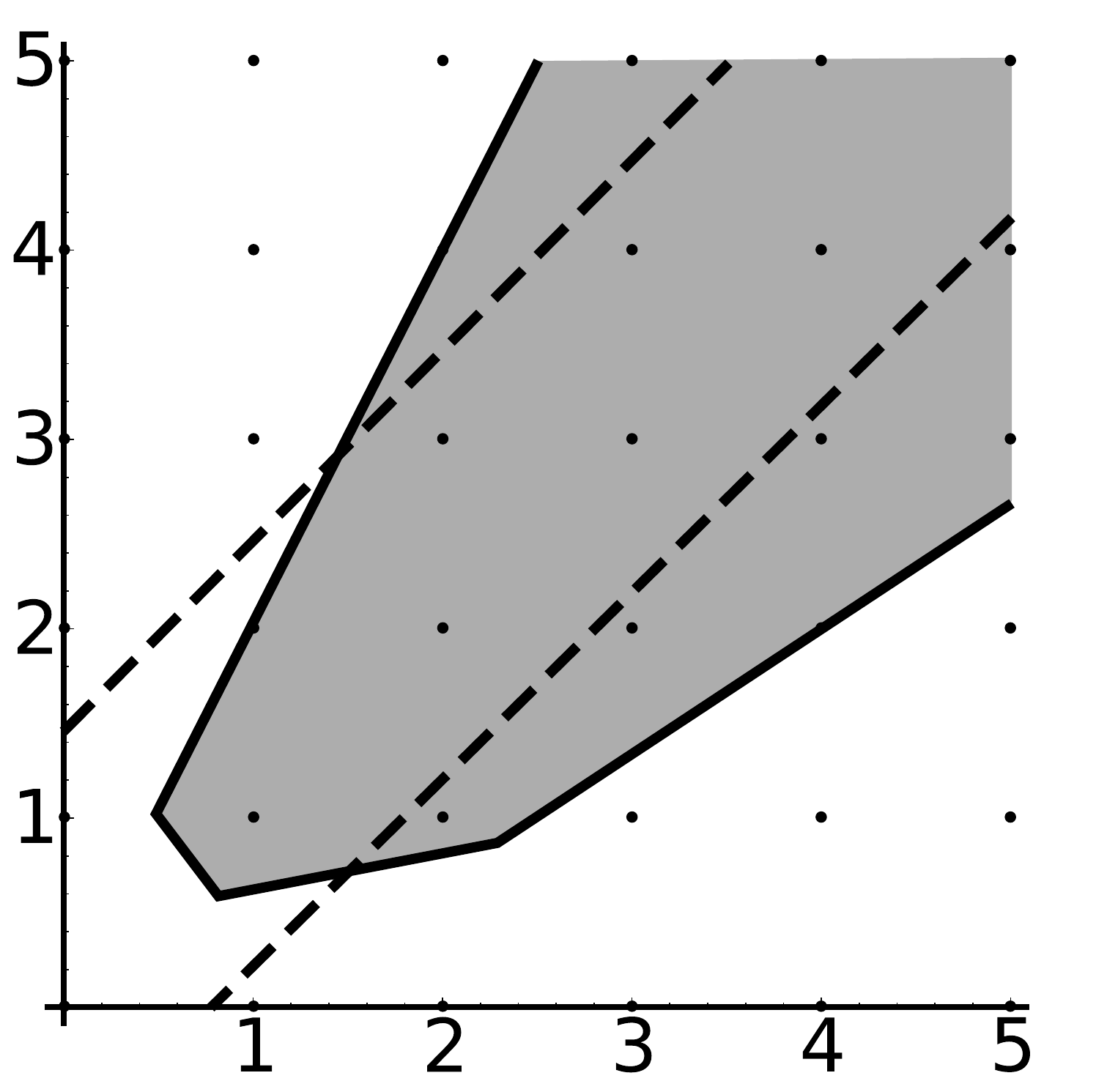}
      \label{fig:unboundedpart}
	}
	\caption{\textbf{a:} a partially bounded system. \textbf{b:} the (double-)bounded part of (a); $v_j := (1,1)^T$ is an orthogonal direction to the bounding row vectors. \textbf{c:} the unbounded part of (a).}
\end{figure}

As the first step of our reduction, we reformulate the constraint system into a so called \emph{split system}:

\begin{definition}[Split System]
$(A x \leq b) \cup (l \leq D x \leq u)$ is a \emph{split system} if: (i) all directions are unbounded in $A x \leq b$; (ii) all row vectors $a_i$ from $A$ are also unbounded in $(A x \leq b) \cup (l \leq D x \leq u)$. Moreover, we call $A x \leq b$ the unbounded part and $l \leq D x \leq u$ the bounded part of the split system.
\end{definition}

A split system consists of an unbounded part $A x \leq b$ that is guaranteed to have (infinitely many) integer solutions (see Lemma~\ref{lemma:absoluteunbounded}) and a double-bounded part $l \leq D x \leq u$. 
Any constraint system can be brought into the above form (see Figures~\ref{fig:splitsystem}---\ref{fig:unboundedpart} for an example). 
We just have to move all unbounded inequalities into the unbounded part and all bounded inequalities into the bounded part. 

\begin{lemma}[Split Equivalence]
Let $A' x \leq b'$ be a constraint system with $A' \in \mathbb{Q}^{m \times n}$. Then there exists an equivalent split system $(A x \leq b) \cup (l \leq D x \leq u)$ where: (i) $A \in \mathbb{Q}^{m_1 \times n}$ and $D \in \mathbb{Q}^{m_2 \times n}$ so that $m_1 + m_2 = m$; (ii) all rows $d_i^T$ of $D$ and $a^T_k$ of $A$ appear as rows in $A'$; and (iii) $D x \leq u$ implies $l \leq D x$.\label{lemma:splitequivalence}
\end{lemma}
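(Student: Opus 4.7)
My plan is to partition the rows of $A' x \leq b'$ according to whether each row vector already represents a bounded direction in the full system. Concretely, for each row $a_i^T x \leq b_i$ of $A' x \leq b'$, I place the inequality into the double-bounded part $D x \leq u$ if $a_i$ is a bounded direction of $A' x \leq b'$, and into the unbounded part $A x \leq b$ otherwise. By construction the resulting set of inequalities is literally $A' x \leq b'$, so the two systems are equivalent, conditions (i) and (ii) of the lemma are immediate, and the second clause of the split-system definition (row vectors of $A$ being unbounded in the full system) follows from the selection rule.

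Two non-trivial claims remain: (a) every direction in $A x \leq b$ alone is unbounded, and (b) $D x \leq u$ alone already implies a lower bound for each $d_i^T x$. For (a), I would argue by contradiction. Suppose some $h \neq 0^n$ were bounded in $A x \leq b$; then by the Linear Implication Lemma there exist non-negative $y, z$ with $y^T A = h^T$ and $z^T A = -h^T$. Summing yields $0^n = (y+z)^T A$; if $y + z = 0^{m_1}$ then $h = 0^n$, contradiction, so some $(y+z)_k > 0$. Solving for $a_k$ expresses $-a_k$ as a non-negative combination of the remaining rows of $A$, which by the Linear Implication Lemma produces a lower bound on $a_k^T x$ already from $A x \leq b$. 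Since $a_k$ was chosen unbounded in $A' x \leq b'$, it is a fortiori unbounded in the smaller subsystem $A x \leq b$ (a subsystem implies fewer bounds), contradicting this derivation.

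For (b), take any row $d_i^T x \leq u_i$ of $D$; since $d_i$ is a bounded direction in $A' x \leq b'$, Farkas supplies non-negative multipliers $y, z$ with $y^T A + z^T D = -d_i^T$ and $y^T b + z^T u \leq -l_i$ for some $l_i \in \mathbb{Q}$. The crux is to show that the combination can avoid every row of $A$, i.e.\ that $y = 0^{m_1}$. If instead some $y_k > 0$, I would rearrange the identity to express $-a_k$ as a non-negative combination of the remaining rows of $A$, the rows of $D$, and $d_i$ itself (each viewed as a row of $A'$ with its own upper bound). The Linear Implication Lemma then delivers a lower bound on $a_k^T x$ in $A' x \leq b'$, contradicting the unboundedness of $a_k$ that put it into $A$. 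Hence $y = 0^{m_1}$, the bound $l_i$ is derivable from $D x \leq u$ alone, and (iii) holds. This last step is the main obstacle: the elegant observation that makes it go through is that any use of an unbounded row $a_k$ inside a lower-bound certificate for a bounded row can always be inverted to produce a lower bound on $a_k^T x$ itself, which directly contradicts how we partitioned the rows in the first step.
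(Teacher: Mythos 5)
Your proof is correct and follows essentially the same route as the paper: partition the rows of $A' x \leq b'$ by boundedness, and for (iii) take a Farkas certificate for the lower bound on $d_i^T x$, isolate any row of $A$ carrying positive weight, and invert the combination (using $d_i^T x \leq u_i$ and the remaining rows with their bounds) to derive a lower bound on $a_k^T x$, contradicting the classification of $a_k$ as unbounded. Your additional step (a) — verifying that every direction, not just every row, is unbounded in $A x \leq b$ alone by summing the two certificates to get $(y+z)^T A = 0^n$ and again inverting on a positive component — addresses a condition of the split-system definition that the paper's proof leaves implicit, and it is sound.
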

\begin{proof}
For (i), (ii), and the equivalence, it is enough to move all bounded inequalities $a'^T_i x \leq b'_i$ of $A' x \leq b'$ into $D x \leq u$ and all unbounded inequalities into $A x \leq b$. 
For (iii), we assume for a contradiction that $D x \leq u$ does not imply $l_i \leq d_i^T x$ but $(D x \leq u) \cup (A x \leq b)$ does. 
By Lemma~\ref{lemma:farkasimplies}, this means that there exists a $y \in \mathbb{Q}^{m_2}$ with $y \geq 0^{m_2}$ and a $z \in \mathbb{Q}^{m_1}$  with $z \geq 0^{m_1}$ so that $y^T D + z^T A = -d_i^T$ and $y^T u + z^T b \leq -l_i$. 
We also know that there exists a $z_k > 0$ because $D x \leq u$ alone does not imply $l_i \leq d_i^T x$. 
We use this fact to reformulate $y^T D + z^T A = -d_i^T$ into  $- a^T_k = \frac{1}{z_k}\left[y^T D + d_i^T + \sum_{j = 1, j \neq k}^{m_1} z_j a^T_j  \right],$ 
and use the bounds of the inequalities in $D x \leq u$ and $A x \leq b$ to derive a lower bound for  $a^T_k x$:
$ - a^T_k x \leq \frac{1}{z_k}\left[y^T u + u_i + \sum_{j = 1, j \neq k}^{m_1} z_j b_j \right].$ 
Hence, $a^T_k$ is bounded in $A' x \leq b'$ and we have our contradiction.
\end{proof}

The above Lemma also shows that the bounded part of a constraint system is self-contained, i.e., a constraint system implies that a direction is bounded if and only if its bounded part does. 
The actual difficulty of reformulating a system into a split system is not the transformation per se, but finding out which inequalities are bounded or not. 
There are many ways to detect whether an inequality is bounded by a constraint system. 
Most work even in polynomial time. 
For instance, solving the linear rational optimization problem ``minimize $a_i^T x$ such that $A x \leq b$'' returns $-\infty$ if $a_i$ is unbounded, $\infty$ if $A x \leq b$ has no rational solution, and the optimal lower bound $l_i$ for $a_i^T x$ otherwise. 
However, it still requires us to solve $m$ linear optimization problems. 

A, in our opinion, more efficient alternative is based on our previously presented algorithm for finding equality bases~\cite{BrombergerWeidenbach:16a}. 
This is due to the following relationship between bounded directions and equalities:

\begin{lemma}[Bounds and Equalities]
Let $\Pset{A}{b} \neq \emptyset$. Then $h$ is bounded in $A x \leq b$ iff $A x \leq 0^m$ implies that $h^T x = 0$.\label{lemma:boundsandequalities}
\end{lemma}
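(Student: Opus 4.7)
The plan is to reduce both sides of the equivalence, via the Linear Implication Lemma (Lemma~\ref{lemma:farkasimplies}), to the same pair of certificates, namely the existence of non-negative vectors $y,z \in \mathbb{Q}^m$ with $y^T A = h^T$ and $z^T A = -h^T$. By Definition~\ref{def:boundeddir}, ``$h$ is bounded in $A x \leq b$'' unfolds into two implications: $Ax \leq b$ implies $h^T x \leq u$, and $Ax \leq b$ implies $-h^T x \leq -l$, for some $u, l \in \mathbb{Q}$. Analogously, ``$A x \leq 0^m$ implies $h^T x = 0$'' unfolds into $Ax \leq 0^m$ implies $h^T x \leq 0$ and $Ax \leq 0^m$ implies $-h^T x \leq 0$. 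In both situations the right-hand-side bound can be taken to equal $y^T b$ (resp.\ $z^T b$) on the LHS, or $y^T 0^m = 0$ (resp.\ $z^T 0^m = 0$) on the RHS, where the latter is automatic.

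For the forward direction, assume $h$ is bounded. Since $\Pset{A}{b} \neq \emptyset$, Lemma~\ref{lemma:farkasimplies} applied to each of $h^T x \leq u$ and $-h^T x \leq -l$ yields non-negative $y, z \in \mathbb{Q}^m$ with $y^T A = h^T$ and $z^T A = -h^T$. Then, for any $x'$ with $A x' \leq 0^m$, we have $h^T x' = y^T A x' \leq y^T 0^m = 0$ and $-h^T x' = z^T A x' \leq 0$, so $h^T x' = 0$.

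For the backward direction, assume $A x \leq 0^m$ implies $h^T x = 0$, i.e., both $h^T x \leq 0$ and $-h^T x \leq 0$. The system $A x \leq 0^m$ has the rational solution $x = 0^n$, so Lemma~\ref{lemma:farkasimplies} applies and produces non-negative $y, z \in \mathbb{Q}^m$ with $y^T A = h^T$ and $z^T A = -h^T$ (the accompanying conditions $y^T 0^m \leq 0$ and $z^T 0^m \leq 0$ are trivial). Now the same $y, z$ certify boundedness of $h$ in $A x \leq b$ by Lemma~\ref{lemma:farkasimplies} in the other direction: for every $x \in \Pset{A}{b}$ we have $h^T x = y^T A x \leq y^T b$ and $-h^T x = z^T A x \leq z^T b$, so one may choose $u := y^T b$ and $l := -z^T b$.

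The proof is short because the real work is already done in Farkas' Lemma; the only subtlety to watch for is the non-emptiness precondition of Lemma~\ref{lemma:farkasimplies}. On the ``bounded'' side this is supplied by the hypothesis $\Pset{A}{b} \neq \emptyset$, and on the ``homogeneous'' side by the fact that $0^n$ always satisfies $A x \leq 0^m$. Thus no additional assumptions beyond $h \neq 0^n$ (implicit in Definition~\ref{def:boundeddir}) and $\Pset{A}{b} \neq \emptyset$ are needed.
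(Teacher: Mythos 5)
Your proof is correct and follows essentially the same route as the paper: both reduce the two statements via the Linear Implication Lemma to the existence of the same certificates $y,z \geq 0^m$ with $y^T A = h^T = -z^T A$, noting that the bound values $u := y^T b$, $l := -z^T b$ (resp.\ $0$) can then be chosen trivially. Your explicit remark that the non-emptiness precondition for the homogeneous system is supplied by $0^n$ is a small point of added care that the paper leaves implicit.
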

\begin{proof}
By Definition~\ref{def:boundeddir}, $h$ is bounded in $A x \leq b$ means that there exists $l, u \in \mathbb{Q}$ such that $A x \leq b$ implies $h^T x \leq u$ and $-h^T x \leq -l$.
By Lemma~\ref{lemma:farkasimplies}, this is equivalent to: there exist $l, u \in \mathbb{Q}$, $y, z \in \mathbb{Q}^m$ with $y, z \geq 0^m$, and $y^T A = h^T = -z^T A$ so that $y^T b \leq u$ and $z^T b \leq -l$. 
Symmetrically, $A x \leq 0$ implies that $h^T x = 0$ is equivalent to: there exist a $y, z \in \mathbb{Q}^m$ with $y, z \geq 0^m$ and $y^T A = h^T = -z^T A$ so that $y^T 0^m \leq 0$ and $z^T 0^m \leq 0$. 
Since $u$ and $l$ only have to exists, we can trivially choose them as $u := y^T b$ and $l := -z^T b$. 
This means that $y^T b \leq u$, $z^T b \leq -l$, $y^T 0^m \leq 0$, and $z^T 0^m \leq 0$ are all trivially satisfied by any pair of linear combinations $y, z \in \mathbb{Q}^m$ with $y, z \geq 0^m$ such that $y^T A = h^T = -z^T A$. 
Hence, the two definitions are equivalent and our lemma holds.
\end{proof}

It is easy and efficient to compute an equality basis for $A x \leq 0^m$ and to determine with it the inequalities in $A x \leq b$ that are bounded~\cite{BrombergerWeidenbach:16a}. 
The only disadvantage towards the optimization approach is that we do not derive an optimal lower bound $l$ for the inequalities. 
This is no problem because only the existence of lower bounds is relevant and not the actual bound values.

In a split system $(A x \leq b) \cup (l \leq D x \leq u)$, 
the unbounded part is actually inconsequential to the rational/mixed satisfiability of the system. 
It may reduce the number of rational/mixed solutions, but it never removes them all. 
Hence, $(A x \leq b) \cup (l \leq D x \leq u)$ is equisatisfiable to just $l \leq D x \leq u$. 
We first show this equisatisfiability for the rational case:

\begin{lemma}[Rational Extension]
Let $(A x \leq b) \cup (l \leq D x \leq u)$ be a split system. 
Let $s \in \mathbb{Q}^n$  be a rational solution to the bounded part $l \leq D x \leq u$ such that $D s = g$, where $g \in \mathbb{Q}^{m_2}$. 
Then $(A x \leq b) \cup (D x = g)$ has a solution $s'$.\label{lemma:rationalsoundness}
\end{lemma}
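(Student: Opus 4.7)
The plan is to apply Farkas' Lemma to the combined system $(A x \leq b) \cup (D x = g)$ and derive a contradiction in each case of the resulting certificate. Writing $D x = g$ as the pair of inequalities $D x \leq g$ and $-D x \leq -g$, infeasibility would yield by Lemma~\ref{lemma:farkasunsat} a nonnegative $z \in \mathbb{Q}^{m_1}$ and a sign-unrestricted $y \in \mathbb{Q}^{m_2}$ (the difference of the two nonnegative multipliers for the equality) such that
\[
z^T A + y^T D = 0^n \qquad \text{and} \qquad z^T b + y^T g < 0.
\]

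First I would dispose of the case $z = 0^{m_1}$. There $y^T D = 0^n$, so using the given solution $s$ with $D s = g$ we get $y^T g = y^T D s = 0$, contradicting $z^T b + y^T g < 0$ (since $z^T b = 0$ here).

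Next, in the case $z \neq 0^{m_1}$, I would pick an index $k$ with $z_k > 0$ and solve $z^T A + y^T D = 0^n$ for $a_k^T$ to obtain
\[
-a_k^T = \frac{1}{z_k}\Bigl(y^T D + \sum_{j \neq k} z_j a_j^T\Bigr).
\]
Splitting $y = y^+ - y^-$ into its positive and negative parts (componentwise) and using the bounds $l \leq D x \leq u$ and $A x \leq b$ from the split system, I would derive
\[
-a_k^T x \leq \frac{1}{z_k}\Bigl( (y^+)^T u - (y^-)^T l + \sum_{j \neq k} z_j b_j\Bigr)
\]
for every $x \in \Rset{(A x \leq b) \cup (l \leq D x \leq u)}$. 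Together with the explicit upper bound $a_k^T x \leq b_k$ already present in $A x \leq b$, this shows that the row direction $a_k$ is bounded in the combined split system, contradicting property (ii) of Definition of a split system.

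The main obstacle is getting the Farkas certificate for the equality $D x = g$ into a usable form; dealing carefully with the unrestricted sign of $y$ and splitting into $y^+, y^-$ is what makes the bound-propagation argument in the second case go through. Once both cases produce contradictions, the combined system $(A x \leq b) \cup (D x = g)$ must have a rational solution $s'$, which is what is required.
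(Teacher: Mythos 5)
Your proof is correct and follows essentially the same route as the paper: apply Farkas' Lemma to $(A x \leq b) \cup (D x = g)$, observe that the multipliers on $A x \leq b$ cannot all vanish (you verify this directly via $y^T D s = 0$, the paper via the satisfiability of $D x = g$ witnessed by $s$), isolate $-a_k^T$ for some strictly positive multiplier $z_k$, and use the bounds $l \leq D x \leq u$ and $A x \leq b$ to conclude that $a_k$ is bounded, contradicting condition (ii) of the split-system definition. Your splitting of the signed multiplier $y$ into $y^+, y^-$ is exactly the paper's pair $z, z'$ of nonnegative multipliers for the two inequality halves of the equality.
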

\begin{proof}
Assume for a contradiction that $(A x \leq b) \cup (D x = g)$ has no solution. 
By Lemma~\ref{lemma:farkasunsat}, this means that there exist a $y \in \mathbb{Q}^{m_1}$ with $y \geq 0^{m_1}$ and $z, z' \in \mathbb{Q}^{m_2}$ with $z, z' \geq 0^{m_2}$ such that $y^T A + z^T D - z'^T D = 0^n$ and $y^T b + z^T g - z'^T g < 0$. 
Since $D x = g$ is satisfiable by itself, there must exist a $y_i > 0$. 
Now we use this fact to reformulate the equation $y^T A + z^T D - z'^T D = 0^n$ into 
\centermath{-a_i^T = \frac{1}{y_i}\left[\left(\sum_{j = 1 \\ j \neq i}^{m_1} y_j a_j^T \right) + z^T D - z'^T D \right],} 
from which we deduce a lower bound for $a_i^T x$ in $(A x \leq b) \cup (l \leq D x \leq u)$:
\centermath{-a_i^T x \leq \frac{1}{y_i}\left[\left(\sum_{j = 1 \\ j \neq i}^{m_1} y_j b_j \right) + z^T u - z'^T l \right].}
Therefore, $a_i$ is bounded in $(A x \leq b) \cup (l \leq D x \leq u)$, which is a contradiction.
\end{proof}

Note that the bounded part $l \leq D x \leq u$ of a split system can still have unbounded directions (not inequalities). 
Some of these unbounded directions in $l \leq D x \leq u$ are the orthogonal directions to the row vectors $d_i$, i.e., vectors $v_j \in \mathbb{Z}^n$ such that $d_i^T v_j = 0$ for all $i \in \{1,\ldots,m_2\}$. 
This also means that the existence of one mixed solution $s \in (\mathbb{Q}^{n_1} \times \mathbb{Z}^{n_2})$ and one unbounded direction proves the existence of infinitely many mixed solutions. 
We just need to follow the orthogonal directions, i.e., for all $\lambda \in \mathbb{Z}$, $s' = \lambda \cdot v_j + s$ is also a mixed solution because $d_i^T s' = \lambda \cdot d_i^T v_j + d_i^T s = d_i^T s$. 
(See Figures~\ref{fig:splitsystem}---\ref{fig:unboundedpart} for an example.)
In the next two steps, we prove that $A x \leq b$ cannot cut off all of these orthogonal solutions because it is completely unbounded. 
The first step proves that $A x \leq b$ remains absolutely unbounded even if we settle on one set of orthogonal solutions, i.e., enforce $D x = D s$ for some solution $s$.

\begin{lemma}[Persistence of Unboundedness]
Let $(A x \leq b) \cup (l \leq D x \leq u)$ be a split system. 
Let $s \in \mathbb{Q}^{n}$ be a rational solution for $l \leq D x \leq u$ such that $D s = g$ (with $g \in \mathbb{Q}^{m_2}$). 
Then all row vectors $a_i$ from $A$ are still unbounded in $(A x \leq b) \cup (D x = g)$.\label{lemma:persistenceunboundedness}
\end{lemma}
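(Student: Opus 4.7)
The plan is a contradiction argument that mirrors the structure of Lemma~\ref{lemma:rationalsoundness}. Suppose some row $a_i$ of $A$ becomes bounded in $(A x \leq b) \cup (D x = g)$. Since $a_i^T x \leq b_i$ is already explicit in $A x \leq b$, boundedness of the direction $a_i$ reduces to the existence of a lower bound, i.e., some $\ell \in \mathbb{Q}$ such that $(A x \leq b) \cup (D x = g)$ implies $-a_i^T x \leq -\ell$.

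To invoke Lemma~\ref{lemma:farkasimplies}, I first need that $(A x \leq b) \cup (D x = g)$ is non-empty; this is exactly the content of Lemma~\ref{lemma:rationalsoundness}, so it can be cited directly. Writing $D x = g$ as the pair $D x \leq g$ and $-D x \leq -g$, Farkas' Lemma then yields multipliers $y \in \mathbb{Q}^{m_1}$ with $y \geq 0^{m_1}$ for the rows of $A x \leq b$, and $z, z' \in \mathbb{Q}^{m_2}$ with $z, z' \geq 0^{m_2}$ for the two halves of $D x = g$, satisfying
\[ y^T A + z^T D - z'^T D = -a_i^T \qquad \text{and} \qquad y^T b + z^T g - z'^T g \leq -\ell. \]

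The crucial move is to replay the very same non-negative multipliers $y, z, z'$ against the original split system. Since $(A x \leq b) \cup (l \leq D x \leq u)$ supplies $A x \leq b$, $D x \leq u$, and (by double-boundedness) $-D x \leq -l$, combining them with $y, z, z'$ produces
\[ -a_i^T x \;\leq\; y^T b + z^T u - z'^T l, \]
which is a concrete lower bound for $a_i^T x$ valid on $(A x \leq b) \cup (l \leq D x \leq u)$. Together with the explicit upper bound $a_i^T x \leq b_i$, this makes $a_i$ bounded in the original split system, contradicting condition (ii) of the split-system definition.

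The main obstacle is conceptual bookkeeping rather than clever algebra: (i) one must observe that the concrete values of $\ell$ and $g$ drop out of the final derivation, so only the sign structure of the multipliers and the existence of some lower bound is actually exploited; and (ii) one must justify the non-emptiness premise of Farkas' Lemma by invoking Lemma~\ref{lemma:rationalsoundness} before any multipliers are extracted. Morally, the lemma states that collapsing the two-sided bounds $l \leq D x \leq u$ to the single equation $D x = g$ cannot manufacture new lower bounds on $a_i^T x$, because any non-negative Farkas derivation of such a bound can be simulated within the original split system by pairing the $z$-multipliers against $u$ and the $z'$-multipliers against $l$.
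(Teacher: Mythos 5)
Your proof is correct, but it takes a genuinely different route from the paper's. You run a Farkas-certificate replay: assuming $a_i$ acquires a lower bound in $(A x \leq b) \cup (D x = g)$, you extract the non-negative multipliers $y, z, z'$ via Lemma~\ref{lemma:farkasimplies} (after securing non-emptiness from Lemma~\ref{lemma:rationalsoundness}) and re-pair $z$ with $u$ and $z'$ with $l$ to manufacture a lower bound on $a_i^T x$ in the original split system, contradicting condition (ii) of the split-system definition. All the steps check out: the reduction of boundedness to the existence of a lower bound is legitimate because $a_i^T x \leq b_i$ is explicit, and the sign structure of the multipliers is all that is needed, exactly as you observe. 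The paper argues instead via Lemma~\ref{lemma:boundsandequalities}: both $(A x \leq b) \cup (l \leq D x \leq u)$ and $(A x \leq b) \cup (D x = g)$ have the same homogenization $(A x \leq 0) \cup (D x = 0)$, and since boundedness of a direction depends only on that homogenization together with non-emptiness (again supplied by Lemma~\ref{lemma:rationalsoundness}), unboundedness of $a_i$ transfers immediately. The paper's argument is shorter and more conceptual -- boundedness is a property of the recession cone, so replacing one set of right-hand sides by another cannot change it -- whereas yours is self-contained modulo Farkas, mirrors the style of the proofs of Lemma~\ref{lemma:splitequivalence} and Lemma~\ref{lemma:rationalsoundness}, and makes the certificate conversion fully explicit.
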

\begin{proof}
By Lemma~\ref{lemma:rationalsoundness}, $(A x \leq b) \cup (D x = g)$ has at least a rational solution $s^*$.
Moreover, $(A x \leq 0) \cup (D x = 0)$ does not imply $a_i^T x = 0$ because of Lemma~\ref{lemma:boundsandequalities} and the assumption that the row vectors $a_i$ from $A$ are unbounded in $(A x \leq b) \cup (l \leq D x \leq u)$.
In reverse, $(A x \leq b) \cup (D x = g)$ having a real solution, $(A x \leq 0) \cup (D x = 0)$ does not imply $a_i^T x = 0$, and Lemma~\ref{lemma:boundsandequalities} prove together that the row vectors $a_i$ from $A$ are also unbounded in $(A x \leq b) \cup (D x = g)$.
\end{proof}

The next step proves how to extend the mixed solution from the bounded part to the complete system with the help of the Mixed-Echelon-Hermite normal form and the absolute unboundedness of $A x \leq b$.

\begin{lemma}[Mixed Extension]
Let $(A x \leq b) \cup (l \leq D x \leq u)$ be a split system. 
Let $s \in (\mathbb{Q}^{n_1} \times \mathbb{Z}^{n_2})$ be a mixed solution for $l \leq D x \leq u$. 
Then  $(A x \leq b) \cup (l \leq D x \leq u)$ has a mixed solution $s'$.\label{lemma:mixedsoundness}
\end{lemma}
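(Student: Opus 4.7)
The plan is to bring the equality constraint $Dx = g$ (with $g := Ds$) into a canonical form using the Mixed-Echelon-Hermite transformation, and then to upgrade the rational solution supplied by Lemma~\ref{lemma:rationalsoundness} to a mixed one by a Farkas-style argument that exploits the absolute unboundedness of $Ax \leq b$ together with condition~(ii) of the split system.

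First I set $g := Ds$; since $l \leq g \leq u$, any $x$ with $Dx = g$ automatically satisfies the bounded part, so it suffices to produce a mixed $s'$ with $Ds' = g$ and $As' \leq b$. By Lemma~\ref{lemma:rationalsoundness} a rational solution $s^* \in \mathbb{Q}^n$ to $(Ax \leq b) \cup (Dx = g)$ exists. After possibly reordering the rows of $D$, I apply Lemma~\ref{lemma:MEHNF} to obtain a mixed column transformation matrix $V$ with $H := DV$ in Mixed-Echelon-Hermite normal form, and set $C := AV$, $y := V^{-1}s$ (mixed by Lemma~\ref{lemma:MCTEQS}), and $y^* := V^{-1}s^*$. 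The shape of $H$ splits the coordinates of $z$ into ``pivot'' indices $I_B$ (the first $r$ rational positions together with the pivot columns of the Hermite block $H'$) and ``free'' indices $I_F$ (the zero columns of $H$); because $H_{\cdot I_B}$ has full column rank, $Hz = g$ uniquely determines $z_{I_B}$, so in particular $y_{I_B} = y^*_{I_B}$, and this common vector already has the correct mixed type since $y$ does. The problem thus reduces to finding a mixed $z_{I_F}$ satisfying $C_{\cdot I_F}\, z_{I_F} \leq b - C_{\cdot I_B}\, y_{I_B}$, for which $y^*_{I_F}$ is already a rational solution.

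The heart of the proof is then to show that this restricted system has all directions unbounded, so that Lemma~\ref{lemma:absoluteunbounded} produces an integer (and hence mixed) $z^\dagger_{I_F}$; assembling $z^\dagger$ from $y_{I_B}$ and $z^\dagger_{I_F}$ and setting $s' := V z^\dagger$ then yields the required mixed solution to the full split system via Lemma~\ref{lemma:MCTEQS}. For the absolute unboundedness of the restricted system, I would argue by contradiction: if its recession cone $\{r_{I_F} : C_{\cdot I_F}\, r_{I_F} \leq 0\}$ were not full-dimensional in $\mathbb{Q}^{|I_F|}$, a Farkas argument would produce a non-trivial $\lambda \geq 0$ with $\lambda^T C_{\cdot I_F} = 0$, and combined with the sign constraints $C_{\cdot I_F}\, r \leq 0$ this would force $c_i^T r = 0$ for every recession direction $r$ and every $i$ with $\lambda_i > 0$. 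On the other hand, condition~(ii) of the split system, transported through $V$ via Lemma~\ref{lemma:MCTIMP}, says that every row of $C$ is unbounded in the combined system $(Cz \leq b) \cup (l \leq Hz \leq u)$, whose recession cone is exactly $\{r : Cr \leq 0,\ r_{I_B} = 0\}$ (the bounded part only constrains $z_{I_B}$); this yields the required $r$ with $c_i^T r \neq 0$ and the contradiction. The main technical effort I anticipate lies in this Farkas calculation and in the bookkeeping needed to track mixed types through the change of variables $V$.
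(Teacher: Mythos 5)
Your proposal is correct and follows essentially the same route as the paper's proof: pass to the Mixed-Echelon-Hermite normal form $H = DV$, note that $Hy = g$ fixes the pivot (bounded) coordinates to the correctly typed values coming from $V^{-1}s$, show that the residual system in the free coordinates is absolutely unbounded, solve it in integers via Lemma~\ref{lemma:absoluteunbounded}, and map back through $V$. The only divergence is that where the paper justifies the absolute unboundedness of the residual system by citing Lemma~\ref{lemma:persistenceunboundedness} together with Lemma~\ref{lemma:MCTIMP}, you inline that step as an explicit Farkas computation (using the rational solution from Lemma~\ref{lemma:rationalsoundness} to license the boundedness characterization); this is a faithful expansion of the same argument rather than a different one.
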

\begin{proof}
Let $g = D s$. 
Without loss of generality we assume that the upper left $r \times n_1$ submatrix of $D$ has the same rank $r$ as the complete left $m_1 \times n_1$ submatrix of $D$. 
(Otherwise, we just reorder the rows accordingly.) 
Therefore, there exists a mixed column transformation matrix $V$ such that $H = D V$ is in mixed-echelon-hermite normal form (see Lemma~\ref{lemma:MEHNF}). 
By Lemma~\ref{lemma:MCTEQS}, there exists a mixed vector $t \in (\mathbb{Q}^{n_1} \times \mathbb{Z}^{n_2})$ such that $s = V t$ and $t$ is a mixed-solution to $l \leq H y \leq u$ as well as $H y = g$.
Let $\mathcal{U}$ be the set of indices with $0$ columns in $H$ and $\mathcal{B}$ the column indices with bounded variables. 
Then the equation system $(H y = g)$ fixes each variable $y_j$ with $j \in \mathcal{B}$ to the value $t_j$ because $H$ is lower triangular with gaps. 
Hence, $((A V) y \leq b) \cup (H y = g)$ is equivalent to 
\centerequation{ A \left[ \sum_{j \in \mathcal{U}} \left( \begin{array}{c}
v_{1j}\\
\vdots\\
v_{nj}
\end{array} \right) \cdot y_j \right] \leq b - A \left[ \sum_{j \in \mathcal{B}} \left( \begin{array}{c}
v_{1j}\\
\vdots\\
v_{nj}
\end{array} \right) \cdot t_j \right]. \label{eq:fixedboundedinunbounded}}
Due to Lemma~\ref{lemma:persistenceunboundedness} and~\ref{lemma:MCTIMP}, all directions are unbounded in (\ref{eq:fixedboundedinunbounded}). 
This means (\ref{eq:fixedboundedinunbounded}) has an integer solution (Lemma~\ref{lemma:absoluteunbounded}) assigning each variable $y_j$ with $j \in \mathcal{U}$ to a $t'_j \in \mathbb{Z}$. (Can be computed via the unit cube test~\cite{BrombergerWeidenbach:17}).
We extend this solution to all variables $y$ by setting $t'_j := t_j$ for $j \in \mathcal{B}$ and we have a mixed solution $t' \in (\mathbb{Q}^{n_1} \times \mathbb{Z}^{n_1})$ for $((A V) y \leq b) \cup (l \leq H y \leq u)$.
Hence, we have via Lemma~\ref{lemma:MCTEQS} a mixed solution $s' \in (\mathbb{Q}^{n_1} \times \mathbb{Z}^{n_2})$ for $(A x \leq b) \cup (l \leq D x \leq u)$ with $s' = V t'$.
\end{proof}


\begin{corollary}[Double-Bounded Reduction]
The split system $(A x \leq b) \cup (l \leq D x \leq u)$ is mixed equisatisfiable to $(l \leq D x \leq u)$.\label{corollary:SplittingEquisatisfiability}
\end{corollary}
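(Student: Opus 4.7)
The plan is to prove the two directions of mixed equisatisfiability separately, noting that almost all of the work has already been done in the preceding lemmas so this corollary should be essentially a one-step consequence.

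For the forward direction, I would observe that it is immediate: if $s \in (\mathbb{Q}^{n_1} \times \mathbb{Z}^{n_2})$ is a mixed solution to $(A x \leq b) \cup (l \leq D x \leq u)$, then in particular it satisfies every inequality in $l \leq D x \leq u$ (recalling that the lower bounds $l$ are implied by $D x \leq u$ in a double-bounded system, per the definition). Thus $s$ is also a mixed solution to $l \leq D x \leq u$. No additional machinery is needed here.

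For the backward direction, which is the substantive content of the corollary, I would simply invoke Lemma~\ref{lemma:mixedsoundness}. Given a mixed solution $s \in (\mathbb{Q}^{n_1} \times \mathbb{Z}^{n_2})$ of $l \leq D x \leq u$, that lemma produces a mixed solution $s'$ for the full split system $(A x \leq b) \cup (l \leq D x \leq u)$. So the backward direction is literally a restatement of the Mixed Extension Lemma.

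Since both directions collapse to already-established facts, I do not expect any real obstacle; the corollary is a packaging step. The only minor care point is to explicitly note the trivial forward direction (which was never stated as its own lemma) before citing Lemma~\ref{lemma:mixedsoundness} for the harder backward direction, so that the equisatisfiability claim is fully justified rather than just the one direction that required the Mixed-Echelon-Hermite machinery and the persistence-of-unboundedness argument.
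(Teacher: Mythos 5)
Your proposal matches the paper exactly: the corollary is stated without proof precisely because the backward direction is Lemma~\ref{lemma:mixedsoundness} verbatim and the forward direction is the trivial observation that a solution of the union satisfies the subsystem $l \leq D x \leq u$. Your explicit mention of the trivial direction is a harmless (and slightly more careful) addition.
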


\section{Experiments}
\label{SE:experiments}
\label{SE:Experiments}

\pgfplotsset{every axis/.append style={very thick}}
\pgfplotsset{every mark/.append style={solid}}
\tikzset{every mark/.append style={scale=10}}
\begin{figure}[t]
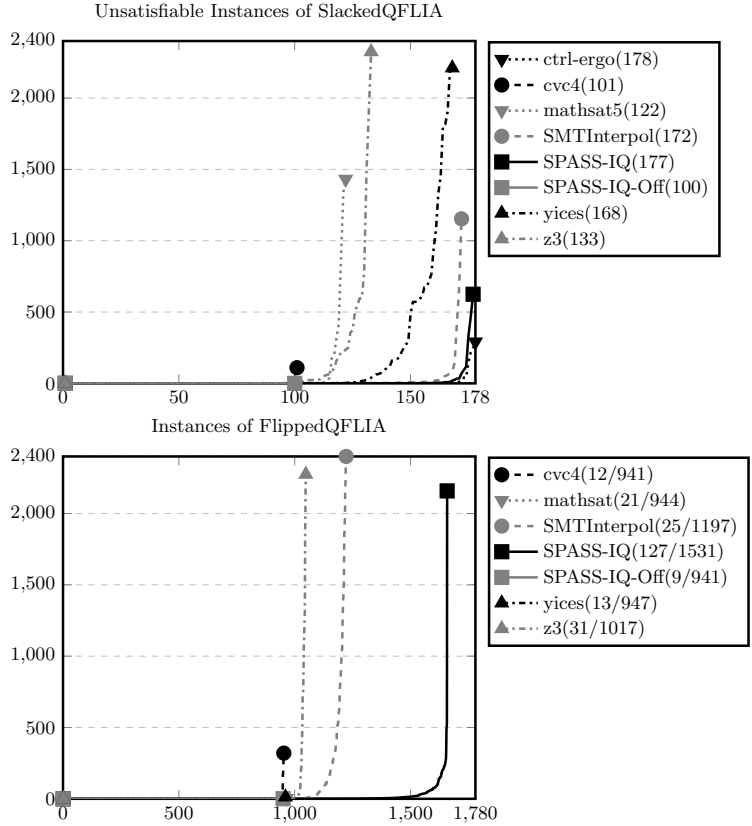

\begin{minipage}[t]{\textwidth}
\scalebox{0.8}{

}
\end{minipage}
\caption{horizontal axis: \# of solved instances; vertical axis: time (seconds)}
\label{figplot3n4}
\end{figure}

We integrated the Double-Bounded reduction and the Mixed-Echelon-Hermite transformation into our own theory solver 
\emph{SPASS-IQ v0.2}\footnote{\label{footnote:webpage}Available on \url{http://www.spass-prover.org/spass-iq}} and ran it on four families of newly constructed benchmarks\footnotemark[6].
Once with the transformations turned on (\emph{SPASS-IQ}) and once with the transformations turned off (\emph{SPASS-IQ-Off}). 
If SPASS-IQ encounters a system $A x \leq b$ that is not explicitly bounded, i.e., where not all variables have an explicit upper and lower bound, 
then it computes an equality basis for $A x \leq 0^m$. 
This basis is used to determine whether the system is implicitly bounded, absolutely unbounded or partially bounded, as well as which of the inequalities are bounded. 
Our solver only applies our two transformations if the problem is partially unbounded. 
The resulting equisatisfiable but bounded problem is then solved via branch-and-bound. 
The other two cases, absolutely unbounded and implicitly bounded, are solved respectively via the unit cube test~\cite{BrombergerWeidenbach:17} and branch-and-bound on the original system. 
Our solver also converts any mixed solutions from the transformed system into mixed solutions for the original system following the proof of Lemma~\ref{lemma:mixedsoundness}. 
Rational conflicts are converted between the two systems by using Corollary~\ref{corollary:MCTCERT}.

We tried to restrict our benchmarks to partially unbounded problems since we only apply our transformations on those problems. 
We even found some partially unbounded problems in the SMT-LIB benchmarks for QF\_LIA (quantifier free linear arithmetic). 
However, there are not many such benchmarks: only one in \textit{CAV-2009}, five in \textit{cut\_lemmas}, and three in \textit{slacks}. 
So we created in addition four new benchmark families:

\textit{SlackedQFLIA}: are linear integer benchmarks based on the SMT-LIB classes \textit{CAV-2009}~\cite{Dillig:09}, \textit{cut\_lemmas}~\cite{Griggio:12}, and \textit{dillig}~\cite{Dillig:09}. 
We simply took all of the unsatisfiable benchmarks and replaced in them all variables $x$ with $x_+ - x_-$ where $x_+$ and $x_-$ are two new variables such that $x_+, x_- \geq 0$. 
This transformation, called slacking, 
is equisatisfiable and the slacked version of the \textit{dillig}-benchmarks, 
called \textit{slacked}~\cite{JovanovicdeMoura:13}, 
is already in the SMT-LIB. 
Slacking turns any unsatisfiable problem into a partially unbounded one. 
Hence, all problems in \textit{SlackedQFLIA} are partially unbounded. 
Slacking is commonly used to integrate absolute values into linear systems or 
for solvers that require non-negative variables~\cite{Schrijver:86}.

\textit{RandomUnbd}: are linear integer benchmarks that are all partially unbounded and satisfiable with 10, 25, 50, 75, and 100 variables. 
All problems are randomly created via a sagemath script\footnotemark[6].

\textit{FlippedQFLIA} and \textit{FlippedRandomUnbd}: 
are linear mixed benchmarks that are all partially unbounded. 
They are based on \textit{SlackedQFLIA} and \textit{RandomUnbd}. 
We constructed them by first copying ten versions of the integer benchmarks and then randomly flipping the type of some of the variables to rational (probability of 20\%). 
Some of the flipped instances of \textit{SlackedQFLIA} became satisfiable.

We compared our solver with some of the state-of-the-art SMT solvers currently available for
linear mixed arithmetic: \emph{cvc4-1.5}~\cite{BarrettCDHJKRT:11}, \emph{mathsat5-5.1}~\cite{CimattiGriggio:13}, \emph{SMTInterpol 2.1-335-g4c543a5}~\cite{ChristHoenickeNutz2012}, \emph{yices2.5.4}~\cite{Dutertre:14}, and \emph{z3-4.6.0}~\cite{deMouraBjorner:08}.
Most of these solvers employ a branch-and-bound approach with an underlying dual simplex solver~\cite{DutertredeMoura:06}, 
which is also the basis for our own solver. 
As far as we are aware, none of them employ any techniques that guarantee termination. 

SMTInterpol extends branch-and-bound via the cuts from proofs approach, 
which uses the Mixed-Echelon-Hermite transformation to find more versatile branches and cuts~\cite{ChristHoenicke2015}. 
Although the procedure is not complete, the similarities to our own approach make an interesting comparison. 
Actually, the Double-Bounded reduction alone would be sufficient to make SMTInterpol terminating 
since it already builds branches via a Mixed-Echelon-Hermite transformation. 
 
We also compared our solver with the \emph{ctrl-ergo} solver~\cite{BobotCCIMMM:12} although it is restricted to pure integer arithmetic. 
Ctrl-ergo is complete over linear integer arithmetic and 
uses the most similar approach to our transformations that we found in the literature. 
It dynamically eliminates one linear independent bounded direction at a time via transformation.  
The disadvantages of the dynamic approach are that 
it is very restrictive and does not leave enough freedom to change strategies or to add complementing techniques.
Moreover, ctrl-ergo uses this transformation approach for all problems and not only the partially unbounded ones, 
which sometimes leads to a massive overhead on bounded problems.

For the experiments, we used a Debian Linux cluster and allotted to each problem and solver combination 2 cores of an Intel Xeon E5620 (2.4 GHz) processor, 4 GB RAM, and 40 minutes. 
The only solver benefiting from multiple cores is SMTInterpol. 
The plots in Figures~\ref{figplot1n2} and~\ref{figplot3n4} depict the results of the different solvers. 
In the legends of the plots, the numbers behind the solver names are the number of solved instances. 
For \textit{FlippedQFLIA}, there are two numbers to indicate the number of satisfiable/unsatisfiable instances solved. 
This is only necessary for \textit{FlippedQFLIA} because it is the only tested benchmark family with satisfiable and unsatisfiable instances.
(We verified that the results match if two solvers solved the same problem.
\footnote{The only discrepancies occurred with mathsat and yices. Both  return on different problems satisfiable, where no other solver returns satisfiable but multiple solvers unsatisfiable. We checked their returned ``satisfiable'' assignments and they were in fact not correct. We contacted the mathsat and yices teams about the bugs and excluded the supposedly wrong results from the experiments table.}) 

Although our solver could not solve all problems (due to time and memory limits) it was still able to solve more problems than the other solvers. 
It was also faster on most instances than the other solvers. 
In some of the unsatisfiable, partially unbounded benchmarks ctrl-ergo is better than SPASS-IQ. 
This is due to its conflict focused, dynamic approach. 
For the same reason, ctrl-ergo is slower on the satisfiable, partially unbounded benchmarks.  
Only SPASS-IQ, ctrl-ergo, and yices solved all of the ten original SMT-LIB benchmarks that are partially unbounded, though the complete methods were still a lot faster (SPASS-IQ took 23s, ctrl-ergo took 42s, and yices took 1273s). 
On one of these benchmarks, 20-14.slacks.smt2 from \textit{slacks}, all other solvers seem to diverge. 
Another interesting result of our experiments is that relaxing some integer variables to rational variables seems to make the problems harder instead of easier. 
We expected this for our transformations because the resulting systems become more complex and less sparse, 
but it is also true for the other solvers. 
The reason might be that bound refinement, a technique used in most branch-and-bound implementations, 
is less effective on mixed problems.

The time SPASS-IQ needs to detect the bounded inequalities and to apply our transformations is negligible. 
This is even true for the implicitly bounded problems we tested. 
As mentioned before, we do not have to apply our transformations to terminate on bounded problems. 
This is also the only advantage we gain from detecting that a problem is implicitly bounded.
Since there is no noticeable difference in the run time, we do not further elaborate the results on bounded problems, e.g. with graphs.

An actual disadvantage of our approach is that the Mixed-Echelon-Hermite transformation increases the density of 
the coefficient matrix as well as the absolute size of the coefficients. 
Both are important factors for the efficiency of the underlying simplex solver. 
Moreover, SPASS-IQ reaches more often the memory limit than the time limit because it needs a (too) large number of branches and bound refinements before terminating.

\section{Conclusion}
\label{SE:Conclusion}

We have presented the Mixed-Echelon-Hermite transformation (Lemma~\ref{lemma:MEHNF}) and the Double-Bounded reduction (Lemma~\ref{lemma:splitequivalence} \& Corollary~\ref{corollary:SplittingEquisatisfiability}). 
We have shown that both transformations together turn any constraint system into an equisatisfiable system that is also bounded (Lemma~\ref{lemma:LTDBS}). 
This is sufficient to make branch-and-bound, and many other linear mixed decision procedures, complete and terminating. 
We have also shown how to convert certificates of (un)satisfiability efficiently between the transformed and original systems (Corollary~\ref{corollary:MCTCERT} \& Lemma~\ref{lemma:mixedsoundness}).
Moreover, experimental results on partially unbounded benchmarks show that our approach is also efficient in practice.

Our approach can be nicely combined with the extensive branch-and-bound framework and its many extensions, 
where other complete techniques cannot be used in a modular way~\cite{BobotCCIMMM:12,BrombergerSturmWeidenbach:15}. For future research, we plan to test our transformations in combination with other algorithms, e.g., cuts from proofs, or as a dynamic version similar to the approach used by ctrl-ergo~\cite{BobotCCIMMM:12}.
We also want to test whether our transformations are useful preprocessing steps for select constraint system classes that are bounded.

\bibliographystyle{abbrv}
\bibliography{paperbib}

\begin{appendix}

\section{Incremental Implementation}
\label{SE:incremental}
\label{SE:Incremental Implementation}

Suppose an SMT theory solver has to solve $(A x \leq b) \cup (D x \leq c)$. 
Moreover, the last problem it has solved was $(A x \leq b)$. 
Then we call the run time advantage it gains from having already solved a subset of the problem its \emph{incremental efficiency}.
Since all problems sent to an SMT theory solver are incrementally connected, 
its incremental efficiency is a major factor in determining its total efficiency.

In this Section, we explain 
how to implement our transformation based approach more incrementally efficient and 
what limits there are with regard to incremental efficiency. 
We start our discussion with incrementally efficient implementations of the subcomponents of our procedure: finding bounded inquealities and computing the Mixed-Echelon-Hermite normal form.

\subsection{Finding Bounded Inequalities Incrementally}
\label{SSE:boundedbasis}

In Section~\ref{SE:splitting}, we explained via Lemma~\ref{lemma:boundsandequalities} that 
all bounded directions in $A x \leq b$ are equalities in $A x \leq 0^m$ and vice versa. 
Based on this fact, we recommend to use the method outlined in~\cite{BrombergerWeidenbach:16a} to compute an equality basis for $A x \leq 0^m$ 
and to determine with it the inequalities in $A x \leq b$ that are bounded. 

We also recommend the equality basis method because it is incrementally efficient~\cite{BrombergerWeidenbach:17}. 
This incremental efficiency directly translates to determining bounded inequalities. 
Since determining the bounded inequalities is the bottleneck of splitting (Section~\ref{SE:splitting}), 
the incremental efficiency also translates to splitting a constraint system.

\subsection{Extending the Mixed-Echelon-Hermite Normal Form Incrementally}
\label{SSE:incrementalmehnf}

\begin{figure}[t]
\begin{minipage}{\linewidth}
\begin{algorithm}[H]
    \algoIfrule
    \caption{$\ExtendMEH(H^{(k)} y \leq u^{(k)}, V^{(k)}, a_{k+1}^T x \leq b_{k+1})$}
    \Input{A system of inequalities $H^{(k)} y \leq u^{(k)}$, where $H^{(k)} \in \mathbb{Q}^{k \times n}$ is in Mixed-Echelon-Hermite normal form and $u^{(k)} \in \mathbb{Q}^{k}$, a mixed column transformation matrix $V^{(k)} \in \mathbb{Q}^{n \times n}$, and an inequality $a_{k+1}^T x \leq b_{k+1}$, where $a_{k+1} \in \mathbb{Q}^{n}$ and $b_{k+1} \in \mathbb{Q}$}
    \Effect{Extends $H^{(k)} y \leq u^{(k)}$ by $a_{k+1}^T V^{(k)} y \leq b_{k+1}$ and transforms it into Mixed-Echelon-Hermite normal form $H^{(k+1)} y \leq u^{(k+1)}$ via mixed column transformation operations, which are stored in $V^{(k+1)}$. $H^{(k+1)}$ has at most one more non-zero column than $H^{(k)}$ and $(H^{(k)} y \leq u^{(k)}) \subset (H^{(k+1)} y \leq u^{(k+1)})$.}
    \Output{($H^{(k+1)} y \leq u^{(k+1)}$,$V^{(k+1)}$)\\}
    $h_{k+1}^T := a_{k+1}^T V^{(k)}$;\label{algline:ExtendMEHIneqTransf}\\
    $p := \RPiv(k,H^{(k)})$;\\
    $j := \RPiv(1,h_{k+1}^T)$;\\
    \lIf{$j > p$}{\Return $\ExtendRat(H^{(k)} y \leq u^{(k)}, V^{(k)}, h_{k+1}^T x \leq b_{k+1},j)$}
    $p := \IPiv(k,H^{(k)})$;\\
    $j := \IPiv(1,h_{k+1}^T)$;\\ 
    \lIf{$j > p$}{\Return $\ExtendInt(H^{(k)} y \leq u^{(k)}, V^{(k)}, h_{k+1}^T x \leq b_{k+1},j)$}
    \Return ($(H^{(k)} y \leq u^{(k)}) \cup (h_{k+1}^T y \leq b_{k+1})$,$V^{(k+1)}$);\label{algline:ExtendMEHNoTransf}
\end{algorithm}
\end{minipage}
\caption{$\ExtendMEH()$ extends a Mixed-Echelon-Hermite normal form by one inequality. All algorithms used in the transformation are based on algorithms from~\cite{Schrijver:86}.}
\label{alg:ExtendMEH}
\end{figure}

\begin{figure}[t]
\begin{minipage}{\linewidth}
\begin{algorithm}[H]
    \algoIfrule
    \caption{$\RPiv(k, H)$}
    \Input{A row dimension $k$ and a matrix $H \in \mathbb{Q}^{k \times n}$\\}
    \Return If all columns $j \leq n_1$ in $H$ are $0^k$, then $0$ is returned. Otherwise, this function returns the largest $0 < j \leq n_1$ such that column $j$ in $H$ is not $0^k$. 
\end{algorithm}
\end{minipage}
\begin{minipage}{\linewidth}
\begin{algorithm}[H]
    \algoIfrule
    \caption{$\IPiv(k, H)$}
    \Input{A row dimension $k$ and a matrix $H \in \mathbb{Q}^{k \times n}$\\}
    \Return If all columns $j > n_1$ in $H$ are $0^k$, then $n_1$ is returned instead. Otherwise, this function returns the largest $n_1 < j \leq  n $ such that column $j$ in $H$ is not $0^k$. 
\end{algorithm}
\end{minipage}
\caption{Helper functions}
\label{alg:rpivandipiv}
\end{figure}

\begin{figure}[t]
\begin{minipage}{\linewidth}
\begin{algorithm}[H]
    \algoIfrule
    \caption{$\ExtendRat(H^{(k)} y \leq u^{(k)}, V^{(k)}, h_{k+1}^T y \leq b_{k+1},j)$}
    \Input{A system of inequalities $H^{(k)} y \leq u^{(k)}$, where $H^{(k)} \in \mathbb{Q}^{k \times n}$ is in Mixed-Echelon-Hermite normal form and $u^{(k)} \in \mathbb{Q}^{k}$, a mixed column transformation matrix $V^{(k)} \in \mathbb{Q}^{n \times n}$, an inequality $h_{k+1}^T y \leq b_{k+1}$, where $h_{k+1} \in \mathbb{Q}^{n}$ and $b_{k+1} \in \mathbb{Q}$, and a column index $j \leq n_1$ such that $h_{k+1j} \neq 0$ and $h^{(k)}_{ij} = 0$ for all $i \leq k$}
    \Effect{Extends $H^{(k)} y \leq u^{(k)}$ by $h_{k+1}^T y \leq b_{k+1}$ and transforms it into Mixed-Echelon-Hermite normal form $H^{(k+1)} y \leq u^{(k+1)}$ via mixed column transformation operations, which are stored in $V^{(k+1)}$. $H^{(k+1)}$ has one more non-zero rational column than $H^{(k)}$ and $(H^{(k)} y \leq b^{(k)}) \subset (H^{(k+1)} y \leq u^{(k+1)})$.}
    \Output{($H^{(k+1)} y \leq u^{(k+1)}$,$V^{(k+1)}$)\\}
    $V^{(k+1)} := V^{(k)}$;\\
    $p := \RPiv(k,H^{(k)})$ + 1;\\
    $H^{(k+1)} y \leq u^{(k+1)} := $ Insert $h_{k+1}^T y \leq b_{k+1}$ between rows $p-1$ and $p$ of $H^{(k)} y \leq u^{(k)}$;\label{algline:ExtendRatInsert}\\
    $V^{(k+1)} := $ Swap Column $j$ and $p$ in $V^{(k+1)}$;\\
    $H^{(k+1)} := $ Swap Column $j$ and $p$ in $H^{(k+1)}$;\\
    $V^{(k+1)} := $ Divide Column $p$ of $V^{(k+1)}$ by $h^{(k+1)}_{pp}$;\\
    $H^{(k+1)} := $ Divide Column $p$ of $H^{(k+1)}$ by $h^{(k+1)}_{pp}$;\\
    \For{$j \in \{1, \ldots, p-1, p+1, \ldots, n\}$}{
      $V_{k+1} := $ Subtract $h^{(k+1)}_{pj}$ times column $p$ from column $j$ of $V_{k+1}$;\\
      $H^{(k+1)} := $ Subtract $h^{(k+1)}_{pj}$ times column $p$ from column $j$ of $H^{(k+1)}$;\\
    }
    \Return ($H^{(k+1)} y \leq u^{(k+1)}$,$V^{(k+1)}$);
\end{algorithm}
\end{minipage}
\caption{$\ExtendRat(H^{(k)} y \leq u^{(k)}, V^{(k)}, h_{k+1}^T y \leq b_{k+1},j)$}
\label{alg:ExtendRat}
\end{figure}

\begin{figure}[t]
\begin{minipage}{\linewidth}
\begin{algorithm}[H]
    \algoIfrule
    \caption{$\ExtendInt(H^{(k)} y \leq u^{(k)}, V^{(k)}, h_{k+1}^T y \leq b_{k+1},j)$}
    \Input{A system of inequalities $H^{(k)} y \leq u^{(k)}$, where $H^{(k)} \in \mathbb{Q}^{k \times n}$ is in Mixed-Echelon-Hermite normal form and $u^{(k)} \in \mathbb{Q}^{k}$, a mixed column transformation matrix $V^{(k)} \in \mathbb{Q}^{n \times n}$, an inequality $h_{k+1}^T y \leq b_{k+1}$, where $h_{k+1} \in \mathbb{Q}^{n}$ and $b_{k+1} \in \mathbb{Q}$, and a column index $j > n_1$ such that $h_{k+1j} \neq 0$ and $h^{(k)}_{ij} = 0$ for all $i \leq k$}
    \Effect{Extends $H^{(k)} y \leq u^{(k)}$ by $h_{k+1}^T y \leq b_{k+1}$ and transforms it into Mixed-Echelon-Hermite normal form $H^{(k+1)} y \leq b^{(k+1)}$ via mixed column transformation operations, which are stored in $V^{(k+1)}$. $H^{(k+1)}$ has one more non-zero integer column than $H^{(k)}$ and $(H^{(k)} y \leq u^{(k)}) \subset (H^{(k+1)} y \leq u^{(k+1)})$.}
    \Output{($H^{(k+1)} y \leq u^{(k+1)}$,$V^{(k+1)}$)\\}
    $p := \IPiv(k,H^{(k)})$ + 1;\\
    $H^{(k+1)} y \leq u^{(k+1)} := $ Insert $h_{k+1}^T y \leq b_{k+1}$ between rows $p-1$ and $p$ of $H^{(k)} y \leq u^{(k)}$;\label{algline:ExtendIntInsert}\\
    $(H^{(k+1)}, V^{(k+1)}) := \ReduceLeftInt(H^{(k+1)}, V^{(k)},p);$\\  
    $(H^{(k+1)}, V^{(k+1)}) := \ReduceRightInt(H^{(k+1)}, V^{(k+1)},p);$\\
    \Return ($H^{(k+1)} y \leq u^{(k+1)}$,$V^{(k+1)}$);
\end{algorithm}
\end{minipage}
\caption{$\ExtendInt(H^{(k)} y \leq u^{(k)}, V^{(k)}, h_{k+1}^T y \leq b_{k+1},j)$}
\label{alg:ExtendInt}
\end{figure}

\begin{figure}[t]
\begin{minipage}{\linewidth}
\begin{algorithm}[H]
    \algoIfrule
    \caption{$\ReduceLeftInt(H^{(k+1)}, V^{(k)},p)$}
    \Input{A matrix $H^{(k+1)} \in \mathbb{Q}^{k+1 \times n}$, a mixed column transformation matrix $V^{(k)} \in \mathbb{Q}^{n \times n}$, and a row and column index $p$.}
    \Effect{Applies mixed column transformations to $H^{(k+1)}$ until all entries $h^{(k+1)}_{pi}$ with $i > p$ are zero. The transformations are combined with the previous transformations into $V^{(k+1)}$. The overall algorithm is based on the Euclidean algorithm for GCD computation.}
    \Output{($H^{(k+1)}$,$V^{(k+1)}$)\\}
    $V^{(k+1)} := V^{(k)};$\\
    \tcc{Since this algorithm is based on GCD computation, we need to abstract the coefficients $h^{(k+1)}_{pi}$ to integers. (Stored in $S$.)}
	$(H^{(k+1)},V^{(k+1)},S) := \AbstractToInt(H^{(k+1)},V^{(k+1)},p)$;\\
    
    \tcc{Next we perform the Euclidean algorithm via column operations on the coefficients stored in $S$.}
    \While{$|S| \neq 1$}{
      $(i,s_{pi}) := $ an $(j,s_{pj}) \in S$ with the smallest $s_{pj}$;\\
      \For{$(j,s_{pj}) \in S$}{
         \lIf{$j = i$}{
         	continue
         }
         $S := S \setminus \{(j,s_{pj})\}$;\\
         $d_{pj} := \floor(s_{pj} \div s_{pi})$;\\
         $s_{pj} := s_{pj} - d_{pj} \cdot s_{pi}$;\\
         $V^{(k+1)} := $ Subtract $d_{pj}$ times column $i$ from column $j$ of $V^{(k+1)}$;\\
         $H^{(k+1)} := $ Subtract $d_{pj}$ times column $i$ from column $j$ of $H^{(k+1)}$;\\
         \lIf{$s_{pj} \neq 0$}{
            $S := S \cup \{(j,s_{pj})\}$
         }
      }
    }
    \tcc{We have found the gcd $s_{pi}$ as soon as $S$ contains only one element $(i,s_{pi})$. We swap it to column $p$.}
    $(i,s_{pi}) := $ the only $(i,s_{pi}) \in S$;\\
    $V^{(k+1)} := $ Swap Column $i$ and $p$ in $V^{(k+1)}$;\\
    $H^{(k+1)} := $ Swap Column $i$ and $p$ in $H^{(k+1)}$;\\
    
    \Return ($H^{(k+1)}$,$V^{(k+1)}$);
\end{algorithm}
\end{minipage}
\caption{$\ReduceLeftInt(H^{(k+1)}, V^{(k)},p)$}
\label{alg:ReduceLeftInt}
\end{figure}

\begin{figure}[t]
\begin{minipage}{\linewidth}
\begin{algorithm}[H]
    \algoIfrule
    \caption{$\AbstractToInt(H^{(k+1)},V^{(k+1)},p)$}
    \Input{A matrix $H^{(k+1)} \in \mathbb{Q}^{k+1 \times n}$, a mixed column transformation matrix $V^{(k+1)} \in \mathbb{Q}^{n \times n}$, and a row and column index $p$.}
    \Effect{Negate all columns $i \geq p$ with $h^{(k+1)}_{pi} < 0$. 
            Extract the integer part $s_{pi}$ of each coefficient $h^{(k+1)}_{pi}$. 
            Store all non-zero integer parts $s_{pi}$ and their column index $i$ in a set $S$.}
    \Output{($H^{(k+1)}$,$V^{(k+1)}$,$S$)\\}
    $S := \emptyset$;\\
    $c := \lcm\{d_{pj} \; \mid \; j \in \{n_1+1, \ldots, n\} \mbox{ and } d_{pj} := \mbox{ denominator of } h^{(k+1)}_{pj}\}$;\\
    \For{$j \in \{p, \ldots, n\}$}{
      \If{$h^{(k+1)}_{pj} < 0$}{
         $V^{(k+1)} := $ Negate column $i$ of $V^{(k+1)}$;\\
         $H^{(k+1)} := $ Negate column $i$ of $H^{(k+1)}$;\\}
      \If{$h^{(k+1)}_{pj} > 0$}{
         $S := S \cup \{(j,h^{(k+1)}_{pj} \cdot c)\}$;\\}
    }
    \Return ($H^{(k+1)}$,$V^{(k+1)}$,$S$);
\end{algorithm}
\end{minipage}
\caption{$\AbstractToInt(H^{(k+1)},V^{(k+1)},p)$}
\label{alg:ReduceRightInt}
\end{figure}

\begin{figure}[t]
\begin{minipage}{\linewidth}
\begin{algorithm}[H]
    \algoIfrule
    \caption{$\ReduceRightInt(H^{(k+1)}, V^{(k+1)},p)$}
    \Input{A matrix $H^{(k+1)} \in \mathbb{Q}^{k+1 \times n}$, a mixed column transformation matrix $V^{(k+1)} \in \mathbb{Q}^{n \times n}$, and a row and column index $p$.}
    \Effect{Applies mixed column transformations to $H^{(k+1)}$ until all entries $h^{(k+1)}_{pi}$ with $n_1 < i < p$ are non-negative and less than $h^{(k+1)}_{pp}$. The transformations are also added to $V^{(k+1)}$.}
    \Output{($H^{(k+1)} y \leq b^{(k+1)}$,$V^{(k+1)}$)\\}
    $c := \lcm\{d_{pj} \; \mid \; j \in \{n_1+1, \ldots, n\} \mbox{ and } d_{pj} := \mbox{ denominator of } h^{(k+1)}_{pj}\}$;\\
    $s_{pp} := h^{(k+1)}_{pp} \cdot c$;\\
    \For{$j \in \{n_1 + 1, \ldots, p-1\}$}{
      $s_{pj} := h^{(k+1)}_{pj} \cdot c$;\\
      $d_{pj} := \floor(s_{pj} \div s_{pp})$;\\
      $V^{(k+1)} := $ Subtract $d_{pj}$ times column $p$ from column $j$ of $V^{(k+1)}$;\\
      $H^{(k+1)} := $ Subtract $d_{pj}$ times column $p$ from column $j$ of $H^{(k+1)}$;\\
    }
    \Return ($H^{(k+1)}$,$V^{(k+1)}$);
\end{algorithm}
\end{minipage}
\caption{$\ReduceRightInt(H^{(k+1)}, V^{(k+1)},p)$}
\label{alg:ReduceRightInt}
\end{figure}

The Mixed-Echelon-Hermite normal form (MEHNF) can also be computed incrementally efficient. 
However, the polynomial time algorithms for computing the reduced echelon column form and the hermite normal form are 
typically less incrementally efficient than the ones based on Gaussian elimination. 
So to achieve incremental efficiency, we have to accept a worst case exponential run time. 
Fortunately, the Gaussian based transformations rarely seem to reach their exponential worst case in practice. 
Our experiments support this assumption since the transformation cost is negligible (if not immeasurable) on all of the tested benchmarks.

Before we can explain the incrementally efficient version of the Mixed-Echelon-Hermite transformation 
(see $\ExtendMEH()$ in Figure~\ref{alg:ExtendMEH})), 
we need to introduce one final notation: we denote by $A^{(k)}$ the $k$-th element of a series of matrices. 

With the algorithm $\ExtendMEH()$, 
we incrementally compute the MEHNFs $H^{(k)} y \leq u^{(k)}$ for the constraint systems $A^{(k)} x \leq b^{(k)}$, 
where $A^{(k)} := (a_1, \ldots, a_k)^T$ and $b^{(k)} := (b_1, \ldots, b_k)^T$. 
As already mentioned in Section~\ref{SE:doublebounded}, it is not possible to transform every matrix $A^{(k)} \in \mathbb{Q}^{k \times n}$ into Mixed-Echelon-Hermite normal form. 
We have to restrict ourselves to matrices, where the upper left $r \times n_1$ submatrix has the same rank $r$ as the complete left $k \times n_1$ submatrix. 
This is very easy to accomplish because we are looking at constraint systems $A^{(k)} x \leq b^{(k)}$ and not just matrices. 
This means we can simply swap the inequalities in $A^{(k)} x \leq b^{(k)}$ to get the systems $C^{(k)} x \leq u^{(k)}$, 
where $C^{(k)}$'s upper left $r \times n_1$ submatrix has the desired form. 
(Note that this is explicitly done by $\ExtendMEH()$.) 
So $H^{(k)} := C^{(k)} V^{(k)}$ and not $H^{(k)} := A^{(k)} V^{(k)}$ for an appropriate mixed transformation matrix $V^{(k)}$.

$\ExtendMEH()$ works as follows:
Initially, our MEHNF $H^{(0)} y \leq u^{(0)}$ is just the empty set and our transformation matrix $V^{(0)}$ is just the $n \times n$ identity matrix. 
Then we incrementally extend $H^{(k)} y \leq u^{(k)}$ and $V^{(k)}$ one inequality $a_{k+1}^T x \leq b_{k+1}$ at a time by computing $(H^{(k+1)} y \leq u^{(k+1)},V^{(k+1)}) := \ExtendMEH(H^{(k)} y \leq u^{(k)}, V^{(k)}, a_{k+1}^T x \leq b_{k+1})$. 
(Note that $V^{(k+1)}$ encompasses all column transformations necessary to transform $C^{(k+1)}$ into $H^{(k+1)}$.)

To this end, $\ExtendMEH()$ first applies the previous column transformations $V^{(k)}$ to $a_{k+1}^T x \leq b_{k+1}$ 
to get the inequality $h_{k+1}^T y \leq b_{k+1}$ (line~\ref{algline:ExtendMEHIneqTransf}). 
Next, $\ExtendMEH()$ checks whether $h_{k+1}^T$ has any non-zero entries $h_{k+1j}$ in one of the zero columns $j$ of $H^{(k)}$. 
If $h_{k+1}^T$ does not have any such entries, 
then no column transformations are necessary and $H^{(k+1)} y \leq u^{(k+1)} := (H^{(k)} y \leq u^{(k)}) \cup (h_{k+1}^T y \leq b_{k+1})$ is in MEHNF (line~\ref{algline:ExtendMEHNoTransf}). 
Otherwise, $(H^{(k)} y \leq u^{(k)}) \cup (h_{k+1}^T y \leq b_{k+1})$ is not in MEHNF 
because $h_{k+1}^T$ fills one of the gaps of $H^{(k)}$, i.e, has a non-zero coefficient in a zero column of $H^{(k)}$. 
In order to resolve this, we have to distinguish between two cases:

Case 1: $h_{k+1}^T$ fills a rational gap of $H^{(k)}$, i.e., 
there exists a zero column $0 < j \leq n_1$ in $H^{(k)}$ such that $h_{k+1j} \neq 0$. 
In this case, we have to extend $H^{(k)}$ from $p-1$ non-zero rational columns to $p$ non-zero rational columns. 
We do so with the function $\ExtendRat()$ (Figure~\ref{alg:ExtendRat}).
$\ExtendRat()$ first inserts the inequality $h_{k+1}^T y \leq b_{k+1}$ at an appropriate position $p$ (line~\ref{algline:ExtendRatInsert}), to solve the rank requirements we discussed before. 
So in the new constraint system $(H^{(k+1)} y \leq u^{(k+1)})$ the inequality $h_{k+1}^T y \leq b_{k+1}$ is located in row $p$. 
Then $\ExtendRat()$ swaps column $j$ with column $p$ and 
uses column operations to eliminate all other coefficients in $h_{k+1}^T$ that have filled gaps in $H^{(k)}$. 
The result $H^{(k+1)} y \leq u^{(k+1)}$ is then again in MEHNF (and $V^{(k+1)}$ is the transformation matrix as specified above).
Since all column operations are performed on columns with gaps in $H^{(k)}$, all inequalities in $H^{(k)} y \leq u^{(k)}$ also appear in $H^{(k+1)} y \leq u^{(k+1)}$, i.e., $(H^{(k)} y \leq u^{(k)}) \subset (H^{(k+1)} y \leq u^{(k+1)})$.

Case 2: $h_{k+1}^T$ fills no rational gap, but an integer gap of $H^{(k)}$, i.e., 
there exists a zero column $n_1 < j \leq n$ in $H^{(k)}$ such that $h_{k+1j} \neq 0$. 
In this case, we have to extend $H^{(k)}$ from $p-1$ non-zero integer columns to $p$ non-zero integer columns. 
We do so with the function $\ExtendInt()$ (Figure~\ref{alg:ExtendInt}).
$\ExtendInt()$ first inserts the inequality $h_{k+1}^T y \leq b_{k+1}$ at an appropriate position $p$ (line~\ref{algline:ExtendIntInsert}), to solve the rank requirements we discussed before. 
So in the new constraint system $(H^{(k+1)} y \leq u^{(k+1)})$ the inequality $h_{k+1}^T y \leq b_{k+1}$ is located in row $p$. 
Then $\ExtendInt()$ swaps column $j$ with column $p$ and 
uses column operations to eliminate all other coefficients in $h_{k+1}^T$ that have filled gaps in $H^{(k)}$. 
The result $H^{(k+1)} y \leq u^{(k+1)}$ is then again in MEHNF (and $V^{(k+1)}$ is the transformation matrix as specified above).
Since all column operations are performed on columns with gaps in $H^{(k)}$, all inequalities in $H^{(k)} y \leq u^{(k)}$ also appear in $H^{(k+1)} y \leq u^{(k+1)}$, i.e., $(H^{(k)} y \leq u^{(k)}) \subset (H^{(k+1)} y \leq u^{(k+1)})$.

The case distinction over the algorithms $\ExtendRat()$ and $\ExtendInt()$ is necessary 
because of the restrictions we have on our column transformations\footnote{Without these restrictions, our transformations would not be mixed equisatisfiable!}, e.g., 
we can add multiples of rational columns to integer columns but not vice versa. 

Since $\ExtendRat()$ and $\ExtendInt()$ change only the new inequality, 
it holds that $H^{(i)} := C^{(i)}V^{(k)}$ for all $i \leq k$. 
This means that an extended transformation matrix still transforms the previous constraint systems into MEHNF. 
We can use this fact to also make backtracking\footnote{Removing inequalities in the order they were added.} efficient. 
In order to remove $a_{k}^T x \leq b_{k}$ from $H^{(k)} y \leq u^{(k)}$, 
we simply remove the $k$-th inequality that was added to the constraint system (can be efficiently marked with a flag) to get again $(H^{(k-1)} y \leq u^{(k-1)})$.
Since $H^{(i)} := C^{(i)}V^{(k)}$ for all $i \leq k$, 
it is not necessary to change the transformation matrix\footnote{When the size of coefficients in $V^{(k)}$ gets too large, 
it can make sense to recompute $H^{(k-1)}$ and $V^{(k-1)}$ to get a smaller transformation matrix.}.
Thus, we have found an incrementally and decrementally efficient way to compute the MEHNF of a constraint system.

\subsection{The Complete Incremental Procedure}

Now that we have incrementally efficient subprocedures, 
we can describe a version of our complete procedure that is incrementally efficient. 
As a reminder, the non-incremental version of our total procedure works as follows:
Our input is a constraint system $A x \leq b$ and we want to find a mixed solution for it. 
To this end, we first compute the equality basis of $A x \leq 0^m$ to find the inequalities and directions in $A x \leq b$ that are bounded. 
Next we do a case distinction depending on whether $A x \leq b$ is bounded, absolutely unbounded or partially unbounded. 
If $A x \leq b$ is bounded, we find the mixed solution via branch-and-bound~\footnote{We recommend to use the version of the dual simplex solver presented by Dutertre and de Moura~\cite{DutertredeMoura:06} as the basis for the underlying branch-and-bound solver. We do so because this version is highly incrementally efficient.}. 
If $A x \leq b$ is absolutely unbounded, we find the mixed solution via the unit cube test~\cite{BrombergerWeidenbach:17}. 
The only slightly complicated case is if $A x \leq b$ is partially unbounded. 
In this case, we first split $A x \leq b$ into a split system and transform the double-bounded part into its MEHNF. 
The double-bounded system in MEHNF is then solved with branch-and-bound.

Now assume that we have done all of the above for $A x \leq b$, but need to incrementally extend it to $(A x \leq b) \cup (A' x \leq b')$.
This means we want to find a mixed solution for $(A x \leq b) \cup (A' x \leq b')$. 
If $(A x \leq b)$ was already bounded, then we know that $(A x \leq b) \cup (A' x \leq b')$ will also be bounded and we simply apply branch-and-bound to it. 
Otherwise, we have to extend the equality basis of $A x \leq 0^m$ to the equality basis of $(A x \leq 0^m) \cup (A' x \leq 0^{m'})$ and use it to find the inequalities and directions in $(A x \leq b) \cup (A' x \leq b')$ that are bounded. 
In Section~\ref{SSE:boundedbasis}, we have shown how to do this incrementally efficient.  
Next we do a case distinction depending on whether $(A x \leq b) \cup (A' x \leq b')$ is bounded, absolutely unbounded or partially unbounded. 
If $(A x \leq b) \cup (A' x \leq b')$ is now bounded, we find the mixed solution via branch-and-bound. 
If $(A x \leq b) \cup (A' x \leq b')$ is still absolutely unbounded, we find the mixed solution via the unit cube test (also an incrementally efficient procedure)~\cite{BrombergerWeidenbach:17}. 
If $(A x \leq b) \cup (A' x \leq b')$ is still partially unbounded, we continue as follows: 
We still have the split system for $(A x \leq b)$ and can now use our extended equality basis for $(A x \leq 0^m) \cup (A' x \leq 0^{m'})$ to efficiently extend it to a split system for $(A x \leq b) \cup (A' x \leq b')$. 
Since adding new inequalities can only add bounded directions, the double-bounded part of the extended split system still contains all bounded inequalities from the previous double-bounded part.
This means we can incrementally extend the MEHNF $l \leq H y \leq u$ by the new inequalities in the double-bounded part of $(A x \leq b) \cup (A' x \leq b')$. 
In Section~\ref{SSE:incrementalmehnf}, we have shown how to do this incrementally efficient. 
Finally, we solve the extended double-bounded constraint system $(l \leq H y \leq u) \cup (l' \leq H' y \leq u')$ with branch-and-bound. 
Since we only add inequalities to the running constraint system $(l \leq H y \leq u)$, 
we can continue our branch-and-bound search incrementally efficient.

This shows that most parts of our procedure can be implemented incrementally efficient. 
However, there are two limits to the incremental efficiency. 
First of all, we have to store multiple constraint systems in our memory to stay incrementally efficient: 
we need one system to store the current equality basis, so we can later extend it; 
we need one system to store the current MEHNF, so we can later extend it;
we need the current transformation matrix of the MEHNF transformation, so we can later extend it; 
and we need one copy of the MEHNF to perform branch-and-bound on. 
Secondly, we do not know how to make the assignment/solution conversion incrementally efficient, i.e., 
how to convert the mixed solution of the transformed system to a mixed solution of the original system in an incrementally efficient way (see Lemma~\ref{lemma:mixedsoundness} for the non-incremental subprocedure). 
However, this second limitation is in reality not a problem 
because there are ways to avoid the conversion until we know that the complete problem is satisfiable. 
So the conversion is used at most once for each SMT input problem. 
In the next subsection, we will elaborate why this is the case.

\subsection{Avoiding Conversion}

In order to explain why we can avoid the conversion, 
we first have to distinguish the origin of the incrementally connected problems, i.e.,
the origin of the problems sent from the SMT solver to the SMT theory solver. 
There are typically two reasons a theory solver might receive incrementally connected problems from the SMT solver: 

(1) The SMT solver tries to prune some partial models (i.e., conjunctions of literals) that are theory unsatisfiable. 
This case is actually not necessary for a complete SMT solver\footnote{Only incomplete models do not have to be checked. Complete models still need to be checked for theory satisfiability!}. 
It is just a trick to speed-up the boolean search of the SMT solver. 
However, it would already be too expensive for the theory solver to check all partial models. 
Instead, they typically just check partial models when the SAT solver is about to do a decision. 
And even then the check is often just a sound approximation of the complete theory solver 
because the complete check is too expensive for some theories. 
One of those theories is in fact linear (mixed) integer arithmetic. 
For this theory, most SMT solvers check only the rational relaxation of the partial models for theory satisfiability. 
So this source of incrementally connected problems is not relevant to our complete approach.

(2) The SMT solver combines multiple theory solvers via the Nelson-Oppen method. 
As part of the Nelson-Oppen method, 
(2.1) each theory solver has to first determine the satisfiability of their own conjunctions of literals. 
(2.2) Then the theory solvers incrementally send to each other (negated) equalities over constant function symbols 
and test these extended problems for satisfiability. 
(2.3) This continues until they find a complete and satisfiable equivalence class over the constant function symbols. 
All of the above can be done with our transformation scheme without converting the intermediate solutions to the original system. 
However, most SMT solvers rely on the intermediate solutions to the original system to guess 
the (negated) equalities they send in step (2.2).

At a first glance, case (2) seems like it actually needs the solution conversion via Lemma~\ref{lemma:mixedsoundness}. 
However, there is an easy and reasonable way to avoid it. 
Instead of using the intermediate solution to the complete original system, 
we just use the intermediate solution to the double-bounded part of the original system. 
This solution can be efficiently computed with the transformation matrix $V$, i.e., 
$x := V y$ is the solution to the double-bounded part of the original system if $y$ is the solution to the transformed system. 
This is a reasonable approximation for the guesses in (2.2)  
because we know that the unbounded part is irrelevant to the satisfiability of the original system (Corollary~\ref{corollary:SplittingEquisatisfiability}). 

We conclude that our total incremental procedure never has to convert a complete solution more than once. 
So the procedure should be incrementally efficient in practice. 
However, we are unable to test this claim with experiments 
since we only have a working theory solver and not a complete SMT solver with multiple theory solvers.

\end{appendix}

\end{document}